\newcommand{\tabincell}[2]{\begin{tabular}{@{}#1@{}}#2\end{tabular}}
\newtheorem{theorem}{Theorem}[section]
\newtheorem{lemma}{Lemma}[section]
\newtheorem{definition}{Definition}[section]
\newtheorem{prop}{Proposition}[section]
\begin{document}
%
\title{Fair and Differentially Private Distributed Frequency Estimation}
%
%
%
%

\author{Mengmeng Yang, \IEEEmembership{Member, IEEE,}
        Ivan Tjuawinata, \IEEEmembership{Member, IEEE,}
        Kwok-Yan Lam, \IEEEmembership{Senior Member, IEEE,}
        Tianqing Zhu, \IEEEmembership{Member, IEEE,}
        Jun Zhao, \IEEEmembership{Member, IEEE}
\IEEEcompsocitemizethanks{\IEEEcompsocthanksitem M. Yang, I. Tjuawinata, K-Y. Lam and J. Zhao are with the Strategic Centre for Research in Privacy-Preserving Technologies \& Systems, Nanyang Technological University, Singapore.\protect\\
E-mail: \{melody.yang, ivan.tjuawinata, kwokyan.lam, junzhao\}@ntu.edu.sg
\IEEEcompsocthanksitem T. Zhu is with University of Technology Sydney, Australia.
E-mail: Tianqing.Zhu@uts.edu.au}
}

%
%

\markboth{Journal of \LaTeX\ Class Files,~Vol.~14, No.~8, August~2015}%
{Shell \MakeLowercase{\textit{et al.}}: Bare Demo of IEEEtran.cls for Computer Society Journals}
%



\IEEEtitleabstractindextext{%
\begin{abstract}
In order to remain competitive, Internet companies collect and analyse user data for the purpose of improving user experiences. Frequency estimation is a widely used statistical tool which could potentially conflict with the relevant privacy regulations. Privacy preserving analytic methods based on differential privacy have been proposed, which either require a large user base or a trusted server; hence may give big companies an unfair advantage while handicapping smaller organizations in their growth opportunity. To address this issue, this paper proposes a fair privacy-preserving sampling-based frequency estimation method and provides a relation between its privacy guarantee, output accuracy, and number of participants. We designed decentralized privacy-preserving aggregation mechanisms using multi-party computation technique and established that, for a limited number of participants and a fixed privacy level, our mechanisms perform better than those that are based on traditional perturbation methods; hence, provide smaller companies a fair growth opportunity. We further propose an architectural model to support weighted aggregation in order to achieve higher accuracy estimate to cater for users with different privacy requirements. Compared to the unweighted aggregation, our method provides a more accurate estimate. Extensive experiments are conducted to show the effectiveness of the proposed methods.
\end{abstract}

\begin{IEEEkeywords}
Differential privacy, frequency estimation, secret sharing, data analytics.
\end{IEEEkeywords}}

\maketitle

\IEEEdisplaynontitleabstractindextext

%
\IEEEpeerreviewmaketitle

\IEEEraisesectionheading{\section{Introduction}\label{sec:introduction}}

%
%
%
%
\IEEEPARstart{T}{he} growth of the digital economy has led to a dramatic increase in the adoption of Internet applications and services used by billions of users. To gain the ability to sustain and grow in the face of fierce competition in the digital arena, applications and services providers need to be able to adapt their products to improve their users' experience. This can be done by analyzing their usage data using various data analytical and  statistical techniques. Frequency estimation is one of the basic statistical tools that can be used to analyse categorical data with a fixed domain space~\cite{cormode2019answering}. It provides companies with the distribution of the users' data which gives insights on the trend of the data as well as users' preferences. It has been widely used by various companies, Google analysts have used frequency estimation to investigate its users' web browsing behaviour~\cite{erlingsson2014rappor}. Furthermore, researchers from Apple have used frequency estimation to analyze users' preference on various system features such as emojis, health data types and media playback~\cite{DPTA}.

Frequency estimation has also been a fundamental building block of many other sophisticated data analysis and machine learning techniques such as ranking~\cite{DLSC15}, feature selection~\cite{Agh18}, and natural language processing~\cite{GDC12}. Despite the usefulness of frequency estimation and data analysis for decision-making process, if it is not done with care, such practice can violate the privacy of users' sensitive information, which has been a more important concern for users which is protected by various personal data and privacy protection regulations. To meet user's privacy expectation while providing accurate estimation, a well designed privacy-preserving mechanism should be used. 

Differential privacy (DP for short) is one of the \textit{de facto} privacy standards that has been used to measure the privacy level provided by various mechanisms to user data. 
Two types of differential privacy have been proposed, centralized DP and local DP. Local DP has attracted more attention recently and widely deployed by many big companies. Under local setting, each user perturbs his private data before sending it to the server. This means that no user's private data ever needs to leave the owner's device, which is a reasonable privacy feature that users desire when contributing their data. However, local DP mechanisms usually require a large number of participants for the analysis to produce useful result while maintaining a certain level of privacy guarantee to users~\cite{erlingsson2014rappor}. This gives big companies an unfair advantage in the competition in digital economy. Compared with local DP, centralized DP mechanisms do not need such a big number of participants. However, it assumes the existence of a trusted server which collects the users' data in the clear. When a query about the data set is received from a client, the trusted server can then perform the required operation to the data set together with the privacy-preserving measure to ensure that the output returned to the client is differentially private. However, the requirement of a server that is trusted by all users presents another hurdle for smaller organizations.

In this work, we aim to design a fair differentially private mechanism to solve the frequency estimation problem which can work well without requiring a large user base, hence can be useful for new and small companies with smaller user base. In essence, we consider a mechanism to be ``fair" when it provides a strong privacy guarantee as well as high estimation accuracy in the situation with a limited number of users. Such scheme enables smaller companies to perform user analysis and hence reduces the disadvantage gap for the growth opportunity of smaller companies.
More specifically, we consider a frequency estimation solution using sampling providing high accuracy output while maintaining the users' privacy without the need of a trusted server even in the situation with a limited number of users. We utilize multiparty computation techniques combined with DP mechanisms to achieve centralized differential privacy guarantee without the existence of any trusted server.
With regards to the DP mechanism that we focus on, we choose to focus on sampling method based on the following efficiency considerations. Sampling-based mechanisms are generally simpler than other DP mechanisms that are traditionally used in such application such as Gaussian mechanism. This leads to the former to generally have smaller computation and communication complexities. Furthermore, as we observed from the analysis done in this work, in some settings, when the privacy level is kept the same, this improvement in the efficiency is complemented with a better estimate accuracy. This motivates the study of sampling-based frequency estimation mechanisms. 

Considering the approach discussed above, in this work, we established a theoretical relation between privacy level, accuracy requirement, and number of users of  a classical sampling-based frequency estimation method. This relation shows that for a limited number of participants and a fixed privacy level, the sampling-based mechanism performs better than mechanisms that are based on more traditional perturbation methods such as Gaussian mechanism. This shows that to achieve the same level of privacy and accuracy settings, such solution requires less participants than its counterparts that are based on Gaussian mechanism. Furthermore, due to the less stringent requirement on the number of users compared to local DP mechanisms as well as the lesser computational demand of the sampling method, the use of MPC schemes becomes more feasible. With this consideration, we utilize MPC schemes to extend the sampling-based mechanism to a decentralized setting. In order to reduce the communication requirement imposed by the MPC scheme, we add another step of sampling such that users only report a fraction of the amount required by the initial mechanism. Combined with the observation we have made in the discussion of the first contribution, this shows that sampling-based differentially private mechanisms may be used to conduct data analysis when no trusted server is present, especially in scenarios where the number of users are limited. This leads to a stronger privacy guarantee compared to the original centralized setting since no confidential value needs to leave the data owners' device which provides a practical privacy-preserving aggregation technique to allow companies with smaller numbers of users to perform privacy-preserving data analytic. Lastly, we proposed an architectural model to support weighted aggregation in order to achieve a higher accuracy estimate in the scenario where users have different sensitivity towards the privacy of their data. The architectural model is designed to reduce the statistical error of the estimate by placing a larger priority to reports with statistically higher accuracy. Compared to a more conventional unweighted aggregation considered in previous investigations~\cite{akter2017computing}, our weighted aggregation method provides a statistically more accurate estimate. Extensive experiments have also been conducted to show the effectiveness of the proposed methods.

The rest of this paper is organized as follows. Section \ref{pre} introduces the preliminaries. The proposed solution is proposed in Section \ref{solu}. Section \ref{expe} and Section \ref{rela} show the experimental result and related work respectively while the paper is concluded in Section \ref{conc}. Due to the page limitations, some results are stated without proof. The proofs can be found in the Supplementary Material.

\section{Preliminaries}\label{pre}

\subsection{Problem Definition} 
In this paper, we consider the basic primitive that computes the item frequency.
Formally, let $U=\{u_1,\cdots, u_n\}$ be a set of users of size $n$, 
each user $u_i\in U$ has a value $v_i$ within a domain $\mathcal{I}=\{I_1,\cdots, I_N\}$ of size $N$ and reports it to an aggregator. 
The aggregator is interested in the number of users holding different items in $\mathcal{I}.$ 

Suppose that each user's private data can be efficiently and uniquely encoded to a data space $\mathcal{D}$ of size $N.$ 
Let $q$ be the smallest prime such that $q>n$ and denote by $\mathbb{F}_q$ the finite field containing $q$ elements. 
Then, we have $\mathcal{D}\subseteq \mathbb{F}_q^N.$ 
In our work, we encode user's value $v_i$ to a vector of length $N, \mathbf{e}_j\in \mathbb{F}_q^N$ which has value $1$ in its $j$-th entry and $0$ everywhere else. In addition, a default encoding without any private value is $\mathbf{0}\in \mathbb{F}_q^N,$ the zero vector of length $N.$ Hence we have $\mathcal{D}=\{\mathbf{0},\mathbf{e}_1,\cdots, \mathbf{e}_N\}.$
Besides, let $[N]$ denote the set $\{1, 2, \dots, N\}.$  We define $S_{N,M}=\{A\subseteq[N]:|A|=M\}$, which will be useful in our discussion of two-stage sampling. $M$ and $N$ are two positive integers. 

To prevent the user's data from being disclosed, we propose a randomized algorithm taking an encoded private data in $\mathcal{D}$ and outputs a report in a report space $\mathcal{R}$ for the reporting process to the aggregator. Such solution is used to collect the users' private value, through which, the data aggregator only learns an estimation for the number of users that hold different items in $\mathcal{I}$ without learning other information of the users' value while keeping the error of the estimations as small as possible. 

This work is based on the following assumptions
\begin{itemize}
    \item Secure communication channel between any pair of participants including server(s) and users. 
    \item Honest-but curious adversary; observes the data and communication of corrupted parties without maliciously changing any data.
    \item The set of corrupted parties contains at most one of the servers (no collusion amongst servers) or some users but not both.
\end{itemize}

\subsection{Differential Privacy}

Differential privacy is a privacy concept proposed by Dwork \textit{et al.} \cite{dwork2006differential} in 2006. 
It protects the users' private information by introducing some randomization defined as follows.

\begin{definition}[$(\epsilon, \delta)$-Differential Privacy \cite{dwork2014algorithmic}] 
Let $\mathcal{M}$ be a probabilistic algorithm, $\mathcal{M}:\mathcal{D}\rightarrow \mathcal{R}.$ $\mathcal{M}$ is $(\epsilon, \delta)$-differentially private if for all possible non-empty sets of outputs $S\subseteq \mathcal{M}(\mathcal{D})$ and for all neighbouring data set $D$ and $D'\in \mathcal{D},$ we have 
\begin{equation}
    Pr[\mathcal{M}(D)\in S] \leq e^\epsilon Pr[\mathcal{M}(D')\in S]+\delta.
\end{equation}
If $\delta=0$, we say that $\mathcal{M}$ is $\epsilon$-differentially private. 
\end{definition}

Intuitively, the definition states that adding, removing or changing a record in a data set 
cannot make a big difference to the final statistics.

\begin{definition}[$\epsilon$-Local Differential Privacy \cite{duchi2013local}] Let
$\mathcal{M}$ be a randomized algorithm, $\mathcal{M}:\mathcal{D}\rightarrow \mathcal{R}.$ $\mathcal{M}$ satisfies $\epsilon$-local differential privacy if and only if for any pair of distinct input values $v, v'\in \mathcal{D}$ and for any $\emptyset\subsetneq S\subseteq \mathcal{M}(\mathcal{D}),$ 
\begin{equation*}
Pr[\mathcal{M}(v)\in S] \leq e^\epsilon Pr[\mathcal{M}(v')\in S].
\end{equation*}
\end{definition}

Compared to centralized differential privacy, the perturbation mechanism $\mathcal{M}$ providing local differential privacy is applied to each data record independently. 

Sampling, generally used under the centralized setting with a trusted server storing the user's raw data, achieves $(\epsilon,\delta)$-differential privacy for mean value estimation~\cite{li2012sampling,sun2020federated}. The server randomly samples some of the records to estimate the overall statistics. To ensure privacy, the sampling result should not be disclosed along with the final statistics. In this work, we focus on sampling method in designing differentially private mechanisms over the more traditional techniques such as Laplace and Gaussian perturbations due to the following practical efficiency considerations. 
\begin{itemize}
    \item We use population and data sampling from either uniform or Bernoulli distribution, which is much simpler than sampling from more sophisticated distributions, such as Laplace or Gaussian. 
    \item Sampling can directly be applied to an $N$ dimensional object. In contrast, single dimensional Laplace and Gaussian perturbation need to be applied to each of the $N$ dimensions independently. 
    \item Population or data sampling produces a vector response over $\mathbb{F}_q$ while Laplace or Gaussian mechanism produces a real-number vector response. In this work, we extend the sampling-based mechanism to a decentralized setting via secret sharing. This can be done to a sampling-based mechanism without changing the underlying field while maintaining information-theoretical security. On the other hand, an information-theoretically secure decentralized protocol based on Gaussian mechanism or Laplace mechanism requires further encoding to a finite field element using techniques such as fixed-point encoding~\cite{OS10}. This leads to a much larger finite field, which increases the storage, commuication, and computation requirements for each user.  
\end{itemize}

We also proved that our mechanism satifies $(\epsilon,\delta)-$ differential privacy and produces a more accurate estimate over smaller population size compared to the estimate produced by the Gaussian mechanism, which is a more traditional mechanism that achieves $(\epsilon,\delta)-$ differential privacy. 

If any part of the mechanism only depends on the output of other mechanisms without any direct dependence on the private data, such sub-protocol can be seen as post-processing and it does not consume any privacy budget, as formalized in Proposition~\ref{prop:postprocessing}.

\begin{prop}[Post-Processing \cite{dwork2014algorithmic}]\label{prop:postprocessing}
Let $\mathcal{M}:\mathcal{D} \rightarrow \mathcal{R}$, be a randomized algorithm that is $(\epsilon, \delta)$-differentially private.
Let $f$ be an arbitrary randomized mapping, $f:\mathcal{R} \rightarrow \mathcal{R}' $. Then $f\circ M: \mathcal{D} \rightarrow \mathcal{R}'$ is $(\epsilon, \delta)$-differentially private.
\end{prop}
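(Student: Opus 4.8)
The plan is to reduce the claim to the case of a \emph{deterministic} post-processing map and then recover the general randomized case by averaging over the internal randomness of $f$. The guiding observation is that composing with $f$ cannot create any new dependence on the private input beyond what $\mathcal{M}$ already reveals, so the privacy loss of $f\circ\mathcal{M}$ is entirely governed by the guarantee already assumed for $\mathcal{M}$.

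First I would treat a fixed deterministic $f:\mathcal{R}\to\mathcal{R}'$. For an arbitrary output event $\emptyset\subsetneq S'\subseteq\mathcal{R}'$, define its preimage $T=f^{-1}(S')=\{r\in\mathcal{R}:f(r)\in S'\}$. Because $f$ is deterministic, the event $\{f(\mathcal{M}(D))\in S'\}$ coincides exactly with $\{\mathcal{M}(D)\in T\}$, so $Pr[f(\mathcal{M}(D))\in S']=Pr[\mathcal{M}(D)\in T]$ and similarly for $D'$. Applying the $(\epsilon,\delta)$-privacy of $\mathcal{M}$ to the set $T$ then yields
\begin{equation*}
Pr[f(\mathcal{M}(D))\in S']=Pr[\mathcal{M}(D)\in T]\leq e^{\epsilon}Pr[\mathcal{M}(D')\in T]+\delta=e^{\epsilon}Pr[f(\mathcal{M}(D'))\in S']+\delta,
\end{equation*}
which is precisely the $(\epsilon,\delta)$-DP inequality for $f\circ\mathcal{M}$ in the deterministic case. (If $T=\emptyset$ the left-hand side is zero and the bound holds trivially, so the non-emptiness requirement on $S'$ causes no difficulty.)

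Next I would handle a general randomized $f$ by viewing it as a distribution over deterministic maps: there is a probability space $(\Omega,\mu)$ and a family $\{f_\omega\}_{\omega\in\Omega}$ of deterministic maps $\mathcal{R}\to\mathcal{R}'$ such that $f$ draws $\omega\sim\mu$ independently of its input and then outputs $f_\omega(\cdot)$. Conditioning on $\omega$ and applying the deterministic bound to each $f_\omega$ gives, after integrating against $\mu$,
\begin{equation*}
Pr[f(\mathcal{M}(D))\in S']=\int_{\Omega}Pr[f_\omega(\mathcal{M}(D))\in S']\,d\mu(\omega)\leq\int_{\Omega}\bigl(e^{\epsilon}Pr[f_\omega(\mathcal{M}(D'))\in S']+\delta\bigr)\,d\mu(\omega),
\end{equation*}
and since $\mu$ is a probability measure the right-hand side equals $e^{\epsilon}Pr[f(\mathcal{M}(D'))\in S']+\delta$, completing the argument.

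The main obstacle is not the algebra but the two structural facts invoked above: that the preimage $T$ is an admissible (measurable) output set of $\mathcal{M}$, so that its privacy guarantee applies to it, and that every randomized $f$ admits the decomposition into independent external randomness $\omega$ together with deterministic maps $f_\omega$. In the discrete, finite setting of this paper — where $\mathcal{R}$ and $\mathcal{R}'$ are finite and the underlying field is $\mathbb{F}_q$ — both facts are immediate, since every subset is measurable and the decomposition is just conditioning on $f$'s random seed; hence no measure-theoretic subtleties arise and the proof reduces to the two displays above.
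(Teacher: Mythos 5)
Your proof is correct: the reduction to a deterministic post-processing map via the preimage $T=f^{-1}(S')$, followed by averaging over the independent randomness of $f$, is exactly the standard argument for this result. The paper itself states this proposition without proof, citing it from Dwork and Roth \cite{dwork2014algorithmic}, and your argument coincides with the proof given there, so there is nothing to reconcile.
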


\subsection{Additive Secret Sharing}
Secret sharing scheme (SSS for short) is a privacy-preserving technique that is designed to enable a dealer with a secret value $s$ to distribute it to a group of participants in the form of shares. It was first independently proposed by Shamir and Blakley in 1979. The share is probabilistically generated from the secret such that sufficient number of the shares can be used to recover the original shares. A family of secret sharing schemes that is widely used is additive secret sharing scheme. The formal definition of additive secret sharing scheme is presented as follows. 
\begin{definition}[Additive Secret Sharing Scheme]
Let $m$ be a positive integer and $\mathbb{F}$ be a finite field. Given a secret value $s\in \mathbb{F},$ an additive secret sharing scheme with $m$ parties over $\mathbb{F}$ generates the $m$ shares in the following way. First, we independently sample $s_1,\cdots, s_{m-1}$ from $\mathbb{F}$ uniformly at random. Having $s_1,\cdots, s_{m-1}$ and $s,$ define $s_m=s-(s_1+\cdots+s_{m-1}).$  We define the $m$ shares to be $(s_1,\cdots, s_m).$ 
\end{definition}
It is easy to see that with the knowledge of all $m$ shares, we can recover $s$ by calculating $s_1+\cdots+s_m.$ However, if an adversary learns at most $m-1$ of the shares, the distribution of $s$ is still uniform, independent of the knowledge of the $m-1$ revealed shares. In other words, he does not learn any information about $s.$ 

A property of additive secret sharing scheme that can be easily verified and essential in this paper is its additive property summarised in the following proposition.

\begin{prop}
Suppose that we are using an additive secret sharing scheme with $m$ parties over a finite field $\mathbb{F}.$ For two secret values $a,b\in \mathbb{F},$ suppose that $a$ and $b$ are additively secretly shared with shares $(a_1,\cdots, a_m)$ and $(b_1,\cdots, b_m)$ respectively. Then $(a_1+b_1,a_2+b_2,\cdots, a_m+b_m)$ gives a valid additive secret sharing of $a+b.$
\end{prop}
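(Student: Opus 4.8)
The plan is to check the two conditions that characterise a valid additive secret sharing of the secret $a+b$: \emph{correctness}, meaning that the sum of the $m$ combined shares equals $a+b$; and \emph{distributional validity}, meaning that the tuple $(a_1+b_1,\ldots,a_m+b_m)$ has exactly the distribution that the generation procedure would assign to a fresh sharing of $a+b$. The second condition is what guarantees that the additive operation preserves the secrecy property noted just before the proposition (that any $m-1$ shares leak nothing about the secret), so it is the part that deserves care.

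First I would verify correctness. By definition of the scheme, $a_m=a-(a_1+\cdots+a_{m-1})$ and $b_m=b-(b_1+\cdots+b_{m-1})$, so $\sum_{i=1}^m a_i=a$ and $\sum_{i=1}^m b_i=b$. Summing the combined shares then gives $\sum_{i=1}^m(a_i+b_i)=\sum_{i=1}^m a_i+\sum_{i=1}^m b_i=a+b$, the required reconstruction identity; this step is a routine rearrangement in the abelian group $(\mathbb{F},+)$.

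The main step is the distributional claim. I would argue that, assuming the two input sharings are generated independently, the first $m-1$ combined shares $a_1+b_1,\ldots,a_{m-1}+b_{m-1}$ are mutually independent and each uniform over $\mathbb{F}$: indeed, $a_i$ is uniform and independent of $b_i$, and adding an independent value to a uniform element of the finite abelian group $\mathbb{F}$ leaves it uniform, while independence across distinct indices follows from the independence of the $2(m-1)$ underlying random shares. Finally, from the correctness identity the last combined share satisfies $a_m+b_m=(a+b)-\sum_{i=1}^{m-1}(a_i+b_i)$, which is precisely the deterministic rule the scheme uses to define the $m$-th share of $a+b$. Hence $(a_1+b_1,\ldots,a_m+b_m)$ is distributed exactly as the scheme would generate a sharing of $a+b$, completing the argument. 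The only subtlety worth stating explicitly is the independence of the two input sharings, without which the uniformity of the combined first $m-1$ shares could fail; since the sharings of $a$ and $b$ are produced by separate invocations of the scheme with fresh randomness, this independence holds.
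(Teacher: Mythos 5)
Your proof is correct, and it does more than the paper does: the paper states this proposition as a property that ``can be easily verified'' and supplies no argument, the implicit verification being only the reconstruction identity $\sum_{i=1}^m(a_i+b_i)=\sum_{i=1}^m a_i+\sum_{i=1}^m b_i=a+b$, which is the first step of your proposal. Your second step --- showing that $(a_1+b_1,\ldots,a_m+b_m)$ has \emph{exactly} the distribution of a fresh sharing of $a+b$, i.e.\ the first $m-1$ combined shares are i.i.d.\ uniform over $\mathbb{F}$ (sums of independent uniforms over disjoint index sets) and the last is the deterministic completion $(a+b)-\sum_{i=1}^{m-1}(a_i+b_i)$ --- is a genuine strengthening, and it is in fact the property the paper silently relies on when it asserts, right after the proposition, that the secret-sharing-based aggregation is information-theoretically secure against $m-1$ colluding parties: reconstruction correctness alone would not give that, since a tuple could sum to $a+b$ while its marginals leak information about $a$ and $b$. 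Your closing caveat that the two sharings must be generated with independent randomness is also the right one to flag; it is exactly the hypothesis under which the uniformity argument goes through, and it holds in the paper's protocols because each user invokes the sharing procedure separately. In short: correct, and a more complete justification than the paper's one-line dismissal.
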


The additive property shows that to find a valid additive secret sharing of the sum of $M$ private values, we can additively secret share each private value to the $m$ parties. Each party can then sum up the $M$ shares he receives to get a share for the sum of the $M$ private values. It can then be shown, by the use of simulation-based security, that such additive secret-sharing based aggregation method provides information theoretical security against an honest-but-curious adversary controlling up to $m-1$ of the computing parties. A more detailed discussion of the simulation-based security and adversarial setting can be found, for example, in~\cite{Lindell20}. Such security guarantee shows that by the use of the additive secret-sharing based aggregation method discussed above, we can simulate the existence of a trusted server who assists the users in the calculation with perfect security without actually needing any such trusted server. This shows that we can aim for a centralized differential privacy guarantee without the need of revealing any private data to other entities.

\section{Proposed solution}  \label{solu}

In this section, we present the proposed solution of privacy-preserving frequency estimation to mitigate the fairness gap between companies with different sizes. Our solution enables organizations with small user size to still perform accurate statistical analysis to their users' data without the need of any users' data to leave their respective devices.
The flow of the proposed methods are shown as follows.

\begin{enumerate}
\item Users are partitioned into several groups according to their privacy preferences. Because of the simplicity of this step, we assume this is done before the protocol and omit the related discussion. 
\item Each privacy group conducts a privacy-preserving frequency-estimation mechanism following the preferred privacy level independently to obtain an estimate for the distribution of the data held by users in the group.
\item Having the estimates of the data distribution from all privacy groups, the server performs weighted aggregation to obtain the estimate of the overall data distribution. This is done using the different weights assigned to the respective privacy groups.
\end{enumerate}

In the following sections, we discuss the mechanisms in Steps 2 and 3 in more detail.

\subsection{Differential Privacy with Distributed Sampling} 
We consider a privacy-preserving solution to the problem of frequency-estimation that is based on sampling method. First, we analyze a common way sampling can be used to design a privacy-preserving frequency estimation mechanism, which is then extended to a decentralized setting. To reduce the communication requirement incurred by the users due to the use of secret sharing, we propose a variant which we name two-stage sampling method.

\subsubsection{General Sampling}

By general sampling, we refer to the method that calculates the frequency estimate from the private values of a randomly chosen subset of the users. 
Specifically, under the centralized setting, the server randomly selects part of the records and perform the statistics over the selected records. 
Algorithm~\ref{Funct:RSwASS} provides the specification of the general sampling-based frequency estimation mechanism, which is denoted by $\mathtt{DPCS}.$

\begin{algorithm}[!htbp]
\caption{Differential privacy with centralized sampling ($\mathtt{DPCS}$)}\label{Funct:RSwASS}
\hspace*{0.02in} {\bf Input:}
The trusted server holds the encoded private values of $n$ users $\mathbf{t}_i\in\{\mathbf{e}_1,\cdots,$  $\mathbf{e}_N\}\subseteq \mathcal{D},$ sampling probability $p,$ privacy budget $\epsilon;$\\
\hspace*{0.02in} {\bf Output:}
Estimate of normalized frequency of each value $\widehat{\boldsymbol \psi};$
\begin{algorithmic}[1]
\STATE The trusted server samples from the $n$ users to choose the participants where each user has probability $p$ to be chosen, let $\mathcal{A}\subseteq U$ be the set of participating users;
\STATE The trusted server computes $\widehat{\boldsymbol \tau} =\sum_{u_i\in \mathcal{A}} \mathbf{t}_i;$ 
\STATE Server computes and outputs the normalized frequency estimation $\widehat{\boldsymbol \psi}=\frac{1}{pn} \widehat{\boldsymbol\tau}$ as a vector over real numbers;
\end{algorithmic}
\end{algorithm}

\noindent\textbf{Privacy analysis:}

We show that $\mathtt{DPCS}$
provides an $(\epsilon,\delta)$ differential privacy, which is discussed in Theorem~\ref{lem1}.
\begin{restatable}{theorem}{lemone}\label{lem1}
Let $\epsilon,\delta>0$ be given. Suppose that there exists a positive real number $\beta$ such that for any $i=1,\cdots, N,$ there are at least $\beta n$ users owning item $I_i$ and
\begin{equation}\label{bound}
\beta\geq \frac{1}{2\pi n (e^{-\epsilon}-e^{-2\epsilon})}\max\left(\left(\frac{2\pi}{\delta}\right)^{\frac{2}{N+1}},\left(\frac{1}{\delta}\right)^{\frac{2}{N}}\right).
\end{equation}
 Then, setting $p= 1-e^{-\epsilon},$ the mechanism $\mathtt{DPCS}$ provided in Algorithm \ref{Funct:RSwASS} is $(\epsilon,\delta)$-differentially private.
\end{restatable}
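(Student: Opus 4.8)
The plan is to reduce the statement to a tail bound for a binomial random variable and then control that tail through a Gaussian approximation. First I would apply post-processing: the released vector $\widehat{\boldsymbol\psi}=\frac{1}{pn}\widehat{\boldsymbol\tau}$ is a fixed deterministic rescaling of $\widehat{\boldsymbol\tau}$, so by Proposition~\ref{prop:postprocessing} it suffices to prove that $D\mapsto\widehat{\boldsymbol\tau}$ is $(\epsilon,\delta)$-differentially private. Because each user owns exactly one item and is sampled independently with probability $p$, the coordinates of $\widehat{\boldsymbol\tau}$ are mutually independent, with $\widehat\tau_j\sim\mathrm{Bin}(c_j,p)$, where $c_j\ge\beta n$ is the number of users owning item $I_j$.

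Next I would compute the privacy loss. For neighbouring datasets that differ in a single user's contribution, only the coordinate(s) indexed by the changed item have an altered distribution; all unaffected coordinates cancel in the density ratio. The single-coordinate ratio of binomial mass functions is $\frac{\Pr[\mathrm{Bin}(m,p)=k]}{\Pr[\mathrm{Bin}(m-1,p)=k]}=\frac{m}{m-k}(1-p)$, and substituting $p=1-e^{-\epsilon}$ (so that $1-p=e^{-\epsilon}$) shows that this ratio exceeds $e^{\epsilon}$ exactly on the upper-tail region $\{k>m(1-e^{-2\epsilon})\}$, while the reverse-direction ratio never exceeds $e^{\epsilon}$ and hence needs no $\delta$ slack. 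The key consequence of this choice of $p$ is that both the distance from the threshold $m(1-e^{-2\epsilon})$ to the mean $mp=m(1-e^{-\epsilon})$ and the per-coordinate variance $mp(1-p)$ equal $m(e^{-\epsilon}-e^{-2\epsilon})$, which is precisely the quantity appearing in~\eqref{bound}.

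The heart of the proof is to bound the probability of this bad region by $\delta$. Here I would invoke the normal approximation to the binomial, bounding each factor $\Pr[\mathrm{Bin}(m,p)=k]$ by the Gaussian peak $(2\pi\,mp(1-p))^{-1/2}$; since $mp(1-p)\ge\beta n(e^{-\epsilon}-e^{-2\epsilon})$, this explains both the $2\pi$ and the variance factor in~\eqref{bound}. Bounding the bad-region probability by a product of such maximal per-coordinate values over the relevant coordinates then yields a bound of the form $(2\pi\beta n(e^{-\epsilon}-e^{-2\epsilon}))^{-N/2}$ in one case and, with an extra $2\pi$ and exponent $(N+1)/2$, in the other; imposing that each be at most $\delta$ and solving for $\beta$ reproduces exactly the two terms inside the maximum of~\eqref{bound}, the binding one being whichever is larger.

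I expect the principal obstacle to be making the binomial-to-Gaussian step rigorous and book-keeping the multivariate structure that produces the two exponents. The local approximation of the binomial mass function only holds up to a controlled error, and one must check that the bad region is handled uniformly over all admissible counts $c_j\ge\beta n$; here it helps that the upper tail of $\mathrm{Bin}(m,p)$ above a fixed multiple of $m$ shrinks as $m$ grows, so that $m=\beta n$ is the worst case. The more delicate point is the combinatorial distinction between the case with $N$ and the case with $N+1$ relevant coordinates---corresponding to whether the neighbouring operation preserves the total number of participants---which is exactly what forces the maximum of two expressions in~\eqref{bound} rather than a single clean bound.
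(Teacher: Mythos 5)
Your setup is sound and matches the natural opening of any proof of this statement: the post-processing reduction to $\widehat{\boldsymbol\tau}$, the independence of the coordinates with $\widehat{\tau}_j\sim\mathrm{Bin}(c_j,p)$ for $c_j\ge\beta n$, the exact ratio $\frac{\Pr[\mathrm{Bin}(m,p)=k]}{\Pr[\mathrm{Bin}(m-1,p)=k]}=\frac{m}{m-k}(1-p)$, the fact that with $p=1-e^{-\epsilon}$ this exceeds $e^{\epsilon}$ precisely on $\{k>m(1-e^{-2\epsilon})\}$ while the reverse ratio never exceeds $e^{\epsilon}$, and the observation that the threshold-to-mean gap and the variance both equal $m(e^{-\epsilon}-e^{-2\epsilon})$. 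All of that is correct.

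The gap is exactly the step you call the heart of the proof, and it is not a matter of rigor but of validity: you cannot bound the probability of the bad region by a product of per-coordinate maximal pmf values. The bad region $B=\{\mathbf{k}:k_a>c_a(1-e^{-2\epsilon})\}$ (or its two-coordinate analogue for a swap of items) constrains only the coordinate(s) touched by the neighbouring operation and is unrestricted in the remaining ones; summing the product pmf over $B$ makes the unrestricted coordinates marginalize to $1$, so they contribute no factors whatsoever. Hence $\Pr[M(D)\in B]$ is intrinsically a one- or two-dimensional binomial tail probability, e.g. $\Pr[\mathrm{Bin}(c_a,p)>c_ap+\sigma_a^2]$ with $\sigma_a^2=c_ap(1-p)$, and a product of $N$ (let alone $N+1$) Gaussian-peak factors $(2\pi\sigma^2)^{-1/2}$ cannot appear: such a product bounds the probability of a \emph{single lattice point}, whereas the tail contains on the order of $c_ae^{-2\epsilon}$ points. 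The legitimate bounds for this tail are Chebyshev, giving $\Pr[B]\le 1/\sigma_a^2\le\bigl(\beta n(e^{-\epsilon}-e^{-2\epsilon})\bigr)^{-1}$ and hence a condition of the form $\beta\ge\frac{1}{\delta n(e^{-\epsilon}-e^{-2\epsilon})}$, or a Chernoff/Gaussian-tail bound, giving $\Pr[B]\le e^{-c\sigma_a^2}$ and hence $\beta n(e^{-\epsilon}-e^{-2\epsilon})\gtrsim\ln(1/\delta)$; both are dimension-free, and neither can produce the exponents $\frac{2}{N}$ and $\frac{2}{N+1}$ in \eqref{bound}. In other words, you have reverse-engineered the algebraic shape of \eqref{bound} rather than derived it, and because the quantity that must be at most $\delta$ lives on at most two coordinates, no repair of the ``product of maxima'' step can close this gap; likewise, the $N$-versus-$(N{+}1)$ distinction you attribute to whether the neighbouring operation preserves the population size has no valid derivation along this route, since those two cases affect one versus two coordinates, not $N$ versus $N+1$.
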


\textit{Discussion.} According to Theorem \ref{lem1}, we find that the privacy level is not only determined by the privacy parameters, but also population size and data dimension. Theorem \ref{lem1} provides a more relaxed bound for the privacy that no matter how the data set changing, as long as it satisfies the Eq. \eqref{bound}, it provides a fixed privacy guarantee for a fixed sampling probability $p.$ To better understand how the sampling mechanism affect the privacy guarantee, we further provide a tighter bound as shown in Theorem \ref{lem2}. For the same sampling probability, Theorem \ref{lem2} provides a stronger privacy guarantee. However, we observe that the privacy depends on a stricter data distribution requirement. 

\begin{restatable}{theorem}{lemtwo}\label{lem2}
Let $\epsilon,\delta>0$ be given. Suppose that there exist two positive real numbers $\beta$ and $z$  such that for any $i=1,\cdots, N,$ there are at least $\beta n$ users owning item $I_i$ where we require $\beta$ and $z$ to satisfy the following requirements. Firstly, we require that
\[\beta\geq \frac{e^{z+\epsilon}}{2\pi n (1-e^{-\frac{\epsilon}{2}})^2}\max\left(\frac{\left(\frac{2\pi}{\delta}\right)^{\frac{2}{N+1}}}{(1+e^{-\frac{\epsilon}{2}})^2},\left(\frac{4\pi}{\delta}\right)^{\frac{2}{N}}\right).\]
Furthermore, we also require that
\begin{equation*}
  \nonumber z\leq \ln(1+n(1-\beta(N-1))(1-e^{-\frac{\epsilon}{2}})).  
\end{equation*}
Then, setting $p= 1-e^{-z-\epsilon},$ the mechanism $\mathtt{DPCS}$ in Algorithm \ref{Funct:RSwASS} preserves $(\epsilon,\delta)$-differential privacy.
\end{restatable}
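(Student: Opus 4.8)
The plan is to establish the bound by the same type of argument that yields Theorem~\ref{lem1}, but sharpening every estimate and introducing the free parameter $z$ through the shifted sampling probability $p=1-e^{-z-\epsilon}$. First I would invoke Proposition~\ref{prop:postprocessing}: since $\widehat{\boldsymbol\psi}=\frac{1}{pn}\widehat{\boldsymbol\tau}$ is a deterministic rescaling of $\widehat{\boldsymbol\tau}$, it suffices to prove $(\epsilon,\delta)$-differential privacy for the unnormalized count vector $\widehat{\boldsymbol\tau}$. The key structural observation is that each user holding item $I_i$ is retained independently with probability $p$, so if $c_i$ denotes the number of users holding $I_i$, then the $i$-th coordinate satisfies $\tau_i\sim\mathrm{Bin}(c_i,p)$ with $c_i\ge\beta n$, and the $N$ coordinates are mutually independent. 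Passing to a neighbouring data set alters one user's value, perturbing only one or two of the parameters $c_i$ by $\pm 1$; every unchanged coordinate then cancels exactly in the privacy-loss ratio. This reduces the problem to controlling a product of at most two per-coordinate log-likelihood ratios of binomials that differ by one trial.

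Next I would use the worst-case characterisation of $(\epsilon,\delta)$-DP, writing $\delta=\Pr[\widehat{\boldsymbol\tau}(D)\in S^\ast]-e^{\epsilon}\Pr[\widehat{\boldsymbol\tau}(D')\in S^\ast]$, where $S^\ast$ is the region on which the privacy loss exceeds $\epsilon$, and bound this residual mass. Here I would replace the binomial point masses by their Gaussian approximations through Stirling's formula (equivalently, the local de Moivre--Laplace estimate). This is precisely where the factors of $2\pi$ and the dependence on the variance $\beta n\,p(1-p)$ enter, since the approximation is only accurate when that variance is large. It is also the source of the $N$-dependent exponents $\tfrac{2}{N}$ and $\tfrac{2}{N+1}$: the residual mass over $S^\ast$ is controlled by the peak of the $N$-dimensional product density, which is of order $(2\pi\,\beta n\,p(1-p))^{-N/2}$, and forcing this peak to be at most $\delta$ and solving for $\beta$ reproduces a bound of exactly the stated shape.

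The refinement over Theorem~\ref{lem1} then comes from two adjustments. First, I would split the budget symmetrically, each of the two perturbed coordinates absorbing half of it; this replaces the single factor $e^{-\epsilon}-e^{-2\epsilon}=e^{-\epsilon}(1-e^{-\epsilon})$ of Theorem~\ref{lem1} by the squared $\epsilon/2$ factors $(1-e^{-\epsilon/2})^2$ and $(1+e^{-\epsilon/2})^2$, consistent with $(1-e^{-\epsilon/2})(1+e^{-\epsilon/2})=1-e^{-\epsilon}$. Second, the shifted sampling probability contributes the extra multiplicative factor $e^{z+\epsilon}=\tfrac{1}{1-p}$ to the likelihood ratio, which is exactly the $e^{z+\epsilon}$ appearing in the prefactor. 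The side condition $z\le\ln\!\bigl(1+n(1-\beta(N-1))(1-e^{-\epsilon/2})\bigr)$ is what keeps this sharper analysis admissible: since each of the other $N-1$ items already claims at least $\beta n$ users, the most populous item has at most $n(1-\beta(N-1))$ users, and this caps how far the threshold defining $S^\ast$ may be pushed, i.e.\ it guarantees that the chosen $p$ still places the bad region deep enough in the tail for the refined estimate to hold.

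The main obstacle, as in Theorem~\ref{lem1}, is the uniform control of the Gaussian-approximation error to the binomial point masses and the conversion of the per-coordinate estimates into a single closed-form inequality on $\beta$. The additional difficulty specific to this tighter statement is the role of the free parameter $z$: increasing $z$ permits a larger sampling probability $p=1-e^{-z-\epsilon}$ (hence a more accurate estimate at the same reported $(\epsilon,\delta)$), but it inflates the required $\beta$ through the $e^{z+\epsilon}$ factor and is capped from above by the size of the most frequent item. The delicate point will be to carry the refined tail estimate through while keeping the choice of $z$ simultaneously within its admissible range and strong enough that the resulting condition on $\beta$ is genuinely tighter than the one in Theorem~\ref{lem1}.
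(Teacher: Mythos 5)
Your skeleton is the right one, and it is clearly the skeleton suggested by the stated constants: reduce to the unnormalized counts $\widehat{\boldsymbol\tau}$ by post-processing, use that the coordinates are independent binomials so that only the one or two perturbed counts survive in the privacy-loss ratio, split the budget into $\epsilon/2$ per affected coordinate (the source of the $(1-e^{-\epsilon/2})^2$ and $(1+e^{-\epsilon/2})^2$ factors), and read $e^{z+\epsilon}$ as $1/(1-p)$. However, two of your steps fail once made precise. The central one is the claim that the mass of the bad region $S^\ast$ ``is controlled by the peak of the $N$-dimensional product density'' $(2\pi\beta n p(1-p))^{-N/2}$, which you then force below $\delta$. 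A peak bounds the mass of a single lattice point, not of a region; moreover, since the privacy-loss ratio depends only on the perturbed coordinates, the untouched coordinates marginalize to $1$, so no $N$-dimensional factor can legitimately appear in $\Pr[S^\ast]$ at all. Concretely, for the item $a$ that loses a user, $S^\ast$ always contains the support-difference event $\{\tau_a=c_a\}$ (impossible under $D'$, which has only $c_a-1$ holders of $I_a$), whose probability under $D$ is exactly $p^{c_a}$; so any valid proof must derive $p^{\beta n+1}\leq\delta$, a requirement that is \emph{logarithmic} in $1/\delta$. The stated hypothesis only gives $2\pi\beta n e^{-z-\epsilon}(1-e^{-\epsilon/2})^2\geq (4\pi/\delta)^{2/N}$, which cannot deliver this once $N\gtrsim\ln(1/\delta)$, because $(4\pi/\delta)^{2/N}=O(1)$ there while $\ln(1/\delta)$ is not. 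For instance, with $N=50$, $\epsilon=1$, $\delta=10^{-10}$, $n=10^6$, one item held by $9\times 10^5$ users and each remaining item by $\beta n\approx 2040$ users, every hypothesis of the statement holds with $z\approx 5.5$, yet $p^{\beta n+1}\approx 0.05\gg\delta$, so the output $\tau_a=\beta n+1$ breaks $(\epsilon,\delta)$-DP outright. This tells you the obstacle you flagged (``uniform control of the Gaussian-approximation error'') is not merely technical: honest tail bounds are exponential in $\beta n(1-p)(1-e^{-\epsilon/2})^2$ (Chernoff-type), no such bound is dominated by the stated polynomial condition uniformly in $N$, and hence the peak shortcut cannot be repaired as described.

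The second failure is your explanation of the cap on $z$. Execute your own $\epsilon/2$ split: for the item $b$ that gains a user, the factor $(k_b+1)/((c_b+1)(1-p))$ exceeds $e^{\epsilon/2}$ on an upper-tail event of $k_b\sim\mathrm{Bin}(c_b,1-p)$, and the minimal requirement for that event to have small probability is that its threshold exceed the mean $c_b(1-p)$, which works out to $c_b(1-e^{-\epsilon/2})\geq e^{z+\epsilon/2}-1$. Every item can be the gainer in some neighbouring pair, so this binds at the \emph{least} frequent item, $c_b=\beta n$, yielding $z\leq\ln(1+\beta n(1-e^{-\epsilon/2}))-\epsilon/2$. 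Your claim that the \emph{most} populous item (count at most $n(1-\beta(N-1))$) is what caps how far the threshold can be pushed is backwards: the stated cap is weaker than the one your argument produces (indeed $n(1-\beta(N-1))\geq\beta n$ always, since $\beta N\leq 1$), so your plan, carried out faithfully, proves a different side condition than the one in the statement and offers no route by which the maximum count could legitimately enter.
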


\noindent\textbf{Utility analysis:}

Here we provide some statistical analysis of $\widehat{\boldsymbol\psi}$ to show the accuracy of the sampling.

\begin{restatable}{lemma}{lemufDPDS}\label{ufDPDS}

Suppose that for $i=1,\cdots, N,$ there are $\Pi_i$ users whose true value is $\mathbf{e}_i.$ In other words, the value we want to achieve is ${\boldsymbol \psi}=\left(\frac{\Pi_1}{n},\cdots, \frac{\Pi_N}{n}\right).$ Then $\widehat{\boldsymbol \psi}$ is an unbiased estimator of ${\boldsymbol\psi}$ and $\mathrm{Var}\left(\widehat{\boldsymbol\psi}\right)= \frac{1-p}{pn^2} V$ where $V$ is a diagonal matrix of size $N\times N$ with the entry in row $i$ column $i$ being $\Pi_i.$ Furthermore, we have the expected square $L_2$ distance of $\widehat{\boldsymbol \psi}$ to the true value ${\boldsymbol\psi}$ to be $\frac{1-p}{pn},$ that is,

\begin{equation*}
    \mathbb{E}\left(\left\| \widehat{\boldsymbol\psi}-{\boldsymbol\psi}\right\|_2^2\right)=\frac{1-p}{pn}.
\end{equation*}

\end{restatable}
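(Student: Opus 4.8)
The plan is to represent the random sampling by independent Bernoulli indicators and then reduce every vector-valued quantity to coordinate-wise computations, exploiting the fact that the one-hot encoding $\mathbf{t}_i\in\{\mathbf{e}_1,\cdots,\mathbf{e}_N\}$ forces each user to contribute to exactly one coordinate of $\widehat{\boldsymbol\tau}$. Concretely, for each user $u_i$ I would introduce an indicator $X_i$ with $X_i=1$ if $u_i\in\mathcal{A}$ and $X_i=0$ otherwise; by Step~1 of Algorithm~\ref{Funct:RSwASS} these are independent and identically distributed $\mathrm{Bernoulli}(p)$ variables. Writing $\widehat{\boldsymbol\tau}=\sum_{i=1}^n X_i\mathbf{t}_i$, the $j$-th coordinate becomes $\widehat{\tau}_j=\sum_{i:\,\mathbf{t}_i=\mathbf{e}_j}X_i$, a sum of exactly $\Pi_j$ independent $\mathrm{Bernoulli}(p)$ terms.

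From this representation unbiasedness is immediate: $\mathbb{E}[\widehat{\tau}_j]=p\Pi_j$, so $\mathbb{E}[\widehat{\psi}_j]=\frac{1}{pn}\cdot p\Pi_j=\frac{\Pi_j}{n}=\psi_j$ for every $j$. For the covariance structure I would apply the standard Bernoulli-sum variance to obtain $\mathrm{Var}(\widehat{\tau}_j)=p(1-p)\Pi_j$. The key observation is that for $j\neq k$ the sums $\widehat{\tau}_j$ and $\widehat{\tau}_k$ range over \emph{disjoint} sets of users, since each $\mathbf{t}_i$ equals exactly one basis vector; hence they are functions of independent collections of the $X_i$ and have zero covariance. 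Consequently $\mathrm{Var}(\widehat{\boldsymbol\tau})$ is diagonal with $(j,j)$-entry $p(1-p)\Pi_j$, and scaling by $\tfrac{1}{(pn)^2}$ gives $\mathrm{Var}(\widehat{\boldsymbol\psi})=\frac{1-p}{pn^2}V$ with $V$ as stated.

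Finally, for the expected squared $L_2$ distance I would use unbiasedness to write $\mathbb{E}\big(\|\widehat{\boldsymbol\psi}-\boldsymbol\psi\|_2^2\big)=\sum_{j=1}^N\mathrm{Var}(\widehat{\psi}_j)=\frac{1-p}{pn^2}\sum_{j=1}^N\Pi_j$, and then close with the counting identity $\sum_{j=1}^N\Pi_j=n$ (each user holds exactly one item), yielding $\frac{1-p}{pn}$. There is no serious obstacle here; the one point genuinely requiring care is the vanishing of the cross-coordinate covariance, which rests entirely on the one-hot encoding making the per-coordinate sums depend on disjoint users, so I would state that step explicitly rather than treat it as routine.
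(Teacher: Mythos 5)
Your proof is correct and matches the paper's argument: the paper likewise treats each coordinate $\widehat{\tau}_j$ as a binomial count with $\Pi_j$ trials and success probability $p$ (this is explicit in the proof of Theorem~\ref{lem:2SSPUnbiased}, which reuses the same structure with $q_\chi$ in place of $p$), derives unbiasedness and the diagonal covariance $\frac{1-p}{pn^2}V$ from the Bernoulli decomposition, and closes with $\sum_{j=1}^N \Pi_j = n$. Your explicit justification of the vanishing cross-coordinate covariance via disjoint user sets is exactly the right point to spell out and is consistent with the paper's treatment.
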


\subsubsection{General Sampling (Decentralized)}

We proceed by designing the protocol $\mathtt{DPDS}$ which realizes $\mathtt{DPCS}$ in a decentralized setting. The full specification of the protocol $\mathtt{DPDS}$ can be found in Algorithm~\ref{RSwASS}.

\begin{algorithm}[!htbp]
\caption{Differential privacy with distributed sampling ($\mathtt{DPDS}$)}\label{RSwASS}
\hspace*{0.02in} {\bf Input:}
Encoded private values of $n$ users $\mathbf{t}_i\in\{\mathbf{e}_1,\cdots,$  $\mathbf{e}_N\}\subseteq \mathcal{D},$ sampling probability $p,$ privacy budget $\epsilon;$\\
\hspace*{0.02in} {\bf Output:}
Estimate of normalized frequency of each value $\widehat{\boldsymbol \psi};$
\begin{algorithmic}[1]
\STATE \textit{[User side]}
\STATE User $u_i$ samples a random Bernoulli random variable with success probability $p$ to represent whether he actively participates on the count;
\IF{$u_i$ actively participates,}
\STATE $u_i$ encodes her response as $\widehat{\mathbf{t}}_i = \mathbf{t}_i\in \mathcal{D}$;
\ELSE 
\STATE $u_i$ encodes her response as $\widehat{\mathbf{t}}_i = \mathbf{0}\in \mathcal{D}$;
\ENDIF
\STATE $u_i$ samples $n-1$ random vectors of length $N,\widehat{\mathbf{s}}_{i,1},$ $\cdots, \widehat{\mathbf{s}}_{i,n-1}\in \mathbb{F}_q^N$ and sets $\widehat{\mathbf{s}}_{i,n}=\widehat{\mathbf{t}}_i-\sum_{j=1}^{n-1} \widehat{\mathbf{s}}_{i,j};$
\STATE $u_i$ sends $\widehat{\mathbf{s}}_{i,j}$ to $u_j$ for $j=1,\cdots, n;$
\STATE $u_i$ computes $\widehat{\boldsymbol \tau}_i=\sum_{j=1}^n \widehat{\mathbf{s}}_{j,i}$ and sends it to the Server;
\STATE \textit{[Server Side]}
\STATE Server computes $\widehat{\boldsymbol \tau}=\sum_{i=1}^n \widehat{\boldsymbol \tau}_i\in \mathbb{F}_q^N$ and regard the sum as a vector over real numbers of length $N;$
\STATE Server computes and outputs the normalized frequency estimation $\widehat{\boldsymbol \psi}=\frac{1}{pn} \widehat{\boldsymbol\tau}$ as a vector over real numbers;
\end{algorithmic}
\end{algorithm}

As shown in Algorithm \ref{RSwASS}, user's value is represented using one hot encoding. Each user decides his extent of participation in the reporting process with probability $p$ (Line 2). Here $p$ indicates the sampling probability, which determines the privacy level. 
If the user decides to actively participate, the user keeps his true value. If the user does not actively participate, his value is perturbed as a zero vector with length $N$ (Lines 3-7). 
The secret-sharing based aggregation scheme discussed before is then used with the $n$ users as the $n$ computing parties before sending the shares of the aggregated value to the server which serves as the aggregator (Lines 8-10).
The server estimates the frequency for each item after collecting reports as $\widehat{\boldsymbol \psi}=\frac{1}{pn}\sum_{i=1}^n \widehat{\boldsymbol \tau}_i$ (Line 12). 

\noindent\textbf{Privacy and utility analysis:}

To better facilitate the extension from the centralized setting to a decentralized one, we introduce a small modification to $\mathtt{DPCS},$ which we will denote by $\mathtt{DPCS}^*.$ The only difference between $\mathtt{DPCS}$ and $\mathtt{DPCS}^*$ lies on what the mechanisms output. Instead of outputting $\widehat{\boldsymbol \psi}$ directly as $\mathtt{DPCS}$ does, the trusted server secretly shares $\widehat{\boldsymbol \tau}$ to $n$ vectors of length $N$ over $\mathbb{F}_q.$ Having these shares, a receiver will recover the value of $\widehat{\boldsymbol \tau},$ regards it as a vector over real numbers and performs the normalization by calculating $\widehat{\boldsymbol \psi} = \frac{1}{pn} \widehat{\boldsymbol \tau}.$ Note that with this modification, the only possible extra information that the receiver may get is the number of users, which is $n.$ However, such information is publicly known. Hence $\mathtt{DPCS}^*$ has the same privacy and accuracy guarantee as $\mathtt{DPCS}$ and the same proofs from Theorems~\ref{lem1} and ~\ref{lem2} as well as Lemma~\ref{ufDPDS} are also applicable to $\mathtt{DPCS}^*.$ It is easy to see that from the perspective of the receiver, $\mathtt{DPDS}$ is indistinguishable from $\mathtt{DPCS}^*.$ Hence, the privacy and accuracy guarantees of $\mathtt{DPCS}^*$ also apply for $\mathtt{DPDS}.$
In addition, since we are using additive secret sharing scheme with $n$ parties, we can guarantee that any collusion of up to $n-1$ parties learns no information regarding the private value of the remaining party. Hence $\mathtt{DPDS}$ provides privacy against an honest-but-curious adversary controlling either up to $n-1$ users or the server but not both.

\noindent \textbf{Complexity analysis:}

It is easy to see that throughout $\mathtt{DPDS},$ each user sends $2nN+N$ elements of $\mathbb{F}_q$ for his communication cost. In terms of computation cost, each user needs to perform $1$ Bernoulli sampling, $(n-1)N$ uniform sampling from $\mathbb{F}_q$ and $2(n-1)N$ field addition operations. On the other hand, the server needs no communication cost and $(n-1)N$ field addition operations and $N$ real number multiplication operations.

\noindent \textbf{Comparison against Distributed Gaussian Mechanism:}

Recall that our mechanism achieves $(\epsilon,\delta)-$differential privacy. A comparable mechanism that can also provide the same privacy guarantee is the Gaussian mechanism. In this section, we provide a discussion on the extension of Gaussian-based frequency estimation to a distributed setting using secret sharing. The perturbation can then be done by letting each user to perform a piece of Gaussian sampling before performing the secret sharing based aggregation.
Recall that for a data set containing data of $n$ users with $\Pi_i$ users holding item $I_i$ for $i=1,\cdots, N,$ the output we want to obtain is $\Psi=\frac{1}{n}\left(\Pi_1,\cdots, \Pi_N\right).$ It is then easy to see that the sensitivity of the query is $G_f=\frac{\sqrt{2}}{n}.$ Hence, to obtain $(\epsilon,\delta)$-differential privacy, the data can be perturbed by a Gaussian noise sampled from $\mathcal{N}(0, G_f^2\sigma_G^2)$ where 
$\sigma_G = \frac{\sqrt{2\ln\left(\frac{1.25}{\delta}\right)}}{\epsilon}$. By the additivity of identically and independently distributed random variables following a fixed Gaussian distribution, such noise can be generated by letting each user add a Gaussian noise sampled from $\mathcal{N}(0,(\frac{G_f\sigma_G}{n})^2)$, performing the additive secret sharing to distribute the shares among the $n$ users and revealing the aggregated result to the server. We denote such mechanism by $\mathtt{DPDG}.$ It is easy to see that $\mathtt{DPDG}$ is $(\epsilon,\delta)$-differentially private which outputs an unbiased estimator of $\Psi$ with variance $V_G=\frac{4\ln\left(\frac{1.25}{\delta}\right)}{n^2\epsilon^2}.$ In the following, we discuss the advantages and disadvantages of $\mathtt{DPDG}$ compared to $\mathtt{DPDS}.$

Compared to $\mathtt{DPDS}$ which requires $\Pi_i\geq \beta n$ for some $\beta\in(0,1)$ to ensure its privacy, $\mathtt{DPDG}$ does not have any of such requirement on the actual data distribution. Furthermore, it provides smaller variance, $V_G<V,$ when $n$ is sufficiently large, i.e., when $n>\frac{4(e^\epsilon-1)\ln\left(\frac{1.25}{\delta}\right)}{\epsilon^2}$. This shows that in situations with smaller population size, $\mathtt{DPDS}$ generally performs better compared to $\mathtt{DPDG}.$ Furthermore, $\mathtt{DPDG}$ generally requires higher complexity in terms of both computation and communication.
Compared to $\mathtt{DPDS},$ the variant $\mathtt{DPDG}$ replaces $1$ Bernoulli sampling outputting a bit with $1$ multi-dimensional Gaussian sampling outputting a vector of length $N$ with each entry being a real number. Since this Gaussian sampling outputs a vector of real numbers, we cannot directly store any element as a field element. Secret sharing schemes over real numbers can generally be defined by encoding real numbers as field elements via fixed point arithmetic~\cite{OS10}. For $k$-bit decimal points accuracy, the field size needs to increase by at least $k$ bits, making both computational and communication complexity to achieve such accuracy to be much larger than those of $\mathtt{DPDS}.$ 

\subsubsection{Two-Stage Sampling}
Although the mechanism $\mathtt{DPDS}$ discussed in the previous section provides high accuracy estimation of the frequency calculation, it requires the distribution of $n$ vectors of length $N$ to be $n$ different users. Such requirement may become infeasible when the number of users grows. To reduce the communication cost, a natural solution is to reduce the amount of items to report as well as the number of shares being generated for each of such shares. Following this observation, we propose a two-stage sampling method. Intuitively, after performing the same sampling method as has been done in $\mathtt{DPDS}$ to decide the extent of participation each user will do, each user performs a second round of sampling to decide the items he is going to report. Specifically, we consider two types of sampling for the second sampling process. 

\noindent\textit{Uniform sampling}. Here, to determine which items to sample, we fix the number of items to report and each item is sampled uniformly at random. By only reporting some of the values instead of the whole $\widehat{\mathbf{t}}_i\in \mathcal{D},$ the overall communication and computational cost can be significantly reduced. However, this comes with a decrease of statistical accuracy. Formally, uniform sampling process is used to uniformly sample $\alpha N$ items out of the $N$ items for each user to report to for a predetermined $\alpha.$ In other words, for any user, for any $A\in S_{N,\alpha N},$ the probability that a user reports the $\alpha N$ items corresponding to elements of $A$ is $\frac{1}{\binom{N}{\alpha N}}.$ We define such reporting distribution to be $\chi_U.$

\noindent\textit{Adaptive sampling}. In order to limit the decrease of the statistical accuracy from reporting less values, the sampling probability may be changed. Instead of sampling the values uniformly, we increase the probability that for actively participating user to report the value corresponding to the actual item he is holding. Specifically, we adopt the idea we proposed in~\cite{9264723} for the item sampling. For a fixed constant $\alpha\in(0,1),$ the sampling process generates a random subset of $[N]$ of size $\alpha N$ with sets containing the user's true value having a larger probability to get sampled. 
We formally define the adaptive sampling process shown as follows.
    \begin{definition}[Adaptive Sampling]
    Let $N$ be a positive integer, $\alpha\in(0,1)$ and $\gamma>1.$ For $i=1,\cdots, n,$ let $Y_i^{(N,\alpha,\gamma)}$ be a random variable with values from $S_{N,\alpha N}$ that represents the set of items that user $u_i$ reports. We suppose further that after the first stage of sampling process, we have encoded the response of $u_i$ to be $\widehat{\mathbf{t}_i}\in \mathcal{D}.$ Then for $A\in S_{N,\alpha N},$ we define $Pr(Y_i^{(N,\alpha,\gamma)}=A)\triangleq\zeta_A$ where
    \[
    \zeta_A=\left\{
    \begin{array}{cc}
    \frac{1}{\binom{N}{\alpha N}},     &\mathrm{~if~}\widehat{\mathbf{t}_i}=\mathbf{0},  \\
    \frac{\gamma}{\gamma\binom{N-1}{\alpha N-1}+\binom{N-1}{\alpha N}},     &\mathrm{~if~}\mathbf{v}=\mathbf{e}_j\mathrm{~and~}j\in A,\\
    \frac{1}{\gamma\binom{N-1}{\alpha N-1}+\binom{N-1}{\alpha N}},     &\mathrm{~if~}\mathbf{v}=\mathbf{e}_j\mathrm{~and~}j\notin A.\\
    \end{array}
    \right.
    \]
    We denote such reporting distribution by $\chi_A.$
    \end{definition}

\textbf{Remark.} The second stage of sampling may reduce the amount of communication and computation needed by each user. However, it also comes with some drawbacks. Firstly, there will need to be a preliminary report from each user to identify the items he will be reporting. This is to ensure that shares are appropriately labeled and aggregation process is not started before all the shares are sent. Furthermore, such process will also leak information. More specifically, due to the preliminary report, the server can identify a smaller group of users where the reported count comes from. In some cases, such additional information may be used to infer more information about users' private values. To avoid such additional information from being leaked, we introduce another server whose role is to receive such preliminary report and perform further processing to reduce the amount of information leakage from the aggregating server. We denote such scheme by $\mathtt{TSS}$ and its full specification can be found in Algorithm~\ref{2SSP}.

\begin{algorithm}[!htbp]
\caption{Two-stage sampling ($\mathtt{TSS}$)}\label{2SSP}
\hspace*{0.02in} {\bf Input:}
Private value of $n$ users $\mathbf{t}_i\in\{\mathbf{e}_1,\cdots, \mathbf{e}_N\}\subseteq \mathbb{F}_q^N,$ sampling probability $p,$ reporting proportion $\alpha,$ reporting distribution $\chi\in\{\chi_U,\chi_A\}$ (with possible reporting parameter $\gamma$ if necessary) and privacy budget $\epsilon;$\\
\hspace*{0.02in} {\bf Output:}
Estimate of normalized frequency of each value $\widehat{\boldsymbol \psi};$
\begin{algorithmic}[1]
\STATE \textit{[User sides]}
\STATE User $u_i$ samples a random Bernoulli random variable with success probability $p$ to represent whether he actively participates on the count;
\IF{$u_i$ actively participates,}
\STATE $u_i$ encodes his response as $\widehat{\mathbf{t}}_i = \mathbf{t}_i\in \mathcal{D}$;
\ELSE 
\STATE $u_i$ encodes his response as $\widehat{\mathbf{t}}_i = \mathbf{0}\in \mathcal{D}$;
\ENDIF
\STATE $u_i$ samples $A_i\in S_{N,\alpha N}$ using distribution $\chi$ with possible implicit parameters $\gamma$ and $\widehat{\mathbf{t}_i};$
\STATE $u_i$ defines a binary vector $\mathbf{a}_i$ where its $j$-th entry is $1$ if $j\in A_i$ and $0$ otherwise;
\STATE \textit{[$\mathcal{S}_1$ Side]}
\STATE $\mathcal{S}_1$ compiles $\mathbf{a}_1,\cdots, \mathbf{a}_n$ to obtain the numbers $m_1,$ $\cdots, m_N$ where $m_j$ is the number of users reporting for the count of item $I_j;$
\FOR{$j=1,\cdots, N$}
\STATE $\mathcal{S}_1$ randomly selects $m_j$ users $u_{j,1},\cdots, u_{j,m_j};$
\STATE $\mathcal{S}_1$ publishes $(j,u_{j,1},\cdots, u_{j,m_j})$ to all users and the aggregating server $\mathcal{S}_2;$
\FOR{$i$ such that $j\in A_i$}
\STATE $u_i$ with encoded response for item $I_j, \widehat{t_{i,j}}$ samples $m_j-1$ random elements of $\mathbb{F}_q,$ denoted by $s_{j,i,1},\cdots,s_{j,i,m_j-1}\in \mathbb{F}_q$ and sets $s_{j,i,m_j}= \widehat{t_{i,j}}-\sum_{t=1}^{m_j-1} s_{j,i,t}\in \mathbb{F}_q;$
\STATE $u_i$ sends $(j,s_{j,i,t})$ to $u_{j,t}$ for $t=1,\cdots, m_j;$
\ENDFOR
\STATE \textit{[For elected users $u_{j,1},\cdots, u_{j,m_j}$]}
\STATE Upon receiving $m_j$ values $s_{j,i,t}$ for $i$ such that $j\in A_i,$ $u_{j,t}$ computes $s_{j,t}=\sum_{i:j\in A_i} s_{j,i,t}$ and sends $(j,s_{j,t})$ to $\mathcal{S}_2;$
\STATE \textit{[$\mathcal{S}_2$ Side]}
\STATE $\mathcal{S}_2$ calculates the estimate of the count for item $j$ as $s_j=\sum_{t=1}^{m_j} s_{j,t}$ and treat it as a real number;
\STATE $\mathcal{S}_2$ performs some post-processing to obtain the unbiased estimator for the count of item $I_j$ as $\widehat{\psi_j}=\frac{1}{q_\chi n}s_j$ where $q_\chi=p p_\chi$ and
\[p_\chi=\left\{
\begin{array}{cc}
    \alpha, &\mathrm{~if~}\chi=\chi_U  \\
     \frac{\alpha\gamma}{\alpha\gamma+1-\alpha},&\mathrm{~if~}\chi=\chi_A 
\end{array}
\right. .
\]

\ENDFOR
\end{algorithmic}
\end{algorithm}

As shown in Algorithm \ref{2SSP}, instead of reporting the entire encoded value, users report the set of items they intend to report to server $\mathcal{S}_1$ after the encoding and sampling processes (Lines $2-9$). So the only information about the users' private values obtained by $\mathcal{S}_1$ is the sets $A_i$ containing the $\alpha N$ items user $u_i$ intends to report which provides the number of users that report different items $I_j$ (Line $11$). 
For $j=1,\cdots, N,$ once the number $m_j$ of users that report item $I_j$ is obtained, server $\mathcal{S}_1$ randomly selects $m_j$ users to help in the reporting process of item $I_j$ to $\mathcal{S}_2$ and inform all participants about the identity of the elected users  (Lines $13$ and $14$). 
The original users who intend to report to item $I_j$ secretly share their encoded response to item $I_j$ to the $m_j$ elected users using additive secret sharing (Lines $15-18$). 
The elected users then aggregate the shares they receive corresponding to item $I_j$ and send the sum to the server $\mathcal{S}_2$ (Lines $20$). 
After receiving all $m_j$ reports from the $m_j$ elected users, server $\mathcal{S}_2$ estimates the distribution of item $I_j$ for all $j=1,\cdots, N$ (Lines $22-23$). 
Through the help of $\mathcal{S}_1$ and the elected users, although server $\mathcal{S}_2$ may learn $m_j,$ the total number of users that may affect the count $s_j,$ the identities of these $m_j$ users are hidden. In the following, we analyze the privacy protection guarantee of $\mathtt{TSS}$ against the two servers. 

\noindent\textbf{Privacy Analysis:}

Recall that server $\mathcal{S}_1$ receives the user's indicator for which items to report. So the information $\mathcal{S}_1$ receives only contains the items each user is intending to report without having further information on the actual report. Intuitively, such information provides no information regarding the users' private value if the reporting distribution $\chi$ is uniform. However, if the indicator is selected adaptively, this information may leak some information about his private value. This is due to the distribution design having sets that contain his actual private value to be more likely to be sampled. As observed in Theorem~\ref{s1lemme}, however, such privacy leak can be well bounded.

\begin{restatable}{theorem}{sonelemme}\label{s1lemme}
Let $\mathcal{M}$ be the mechanism following $\mathtt{TSS}$ from the point of view of $\mathcal{S}_1.$ In other words, for each user, the mechanism takes $\mathbf{v}_i$ as input and outputs $A_i.$ Let $\alpha$ be the reporting proportion for the second stage while $\chi_U$ and $\chi_A$ represent uniform and adaptive distribution for the reporting distribution respectively. Then the mechanism $\mathcal{M}$ provides an $\epsilon^*$ local differential privacy against $\mathcal{S}_1$ where 
\[
\epsilon^*=\left\{
\begin{array}{cc}
0, &\mathrm{~if~}\chi=\chi_U;\\
\log \gamma,&\mathrm{~if~}\chi=\chi_A;
\end{array}
\right.
\] 
\end{restatable}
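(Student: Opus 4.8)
The plan is to verify the $\epsilon$-local differential privacy definition directly for the mechanism $\mathcal{M}:\mathbf{v}_i\mapsto A_i$, that is, to bound the ratio $\frac{Pr[\mathcal{M}(\mathbf{v})=A]}{Pr[\mathcal{M}(\mathbf{v}')=A]}$ for every pair of distinct private values $\mathbf{v},\mathbf{v}'\in\{\mathbf{e}_1,\dots,\mathbf{e}_N\}$ and every output $A\in S_{N,\alpha N}$. The first thing I would record is that $A_i$ depends on the private value only through the post-first-stage encoding $\widehat{\mathbf{t}}_i$, so I can compute the conditional law of $A_i$ by conditioning on the outcome of the Bernoulli step in Line~2 of $\mathtt{TSS}$. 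For the uniform case $\chi=\chi_U$, I would note that Line~8 draws $A_i$ uniformly from $S_{N,\alpha N}$ irrespective of $\widehat{\mathbf{t}}_i$, hence irrespective of both the Bernoulli outcome and $\mathbf{v}_i$; thus $Pr[\mathcal{M}(\mathbf{v})=A]=\binom{N}{\alpha N}^{-1}$ for every input and output, the ratio is identically $1$, and $\mathcal{M}$ is $0$-LDP.

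For the adaptive case $\chi=\chi_A$, I would write the conditional probability as a mixture over the first-stage outcome,
\[
Pr[\mathcal{M}(\mathbf{e}_j)=A]=p\,\zeta_A^{(\mathbf{e}_j)}+(1-p)\binom{N}{\alpha N}^{-1},
\]
where $\zeta_A^{(\mathbf{e}_j)}$ denotes the adaptive reporting probability when $\widehat{\mathbf{t}}_i=\mathbf{e}_j$. By the definition of adaptive sampling, $\zeta_A^{(\mathbf{e}_j)}$ equals $\gamma/D$ when $j\in A$ and $1/D$ when $j\notin A$, where $D=\gamma\binom{N-1}{\alpha N-1}+\binom{N-1}{\alpha N}$ is the normalizer; I would check that the $\zeta_A^{(\mathbf{e}_j)}$ sum to $1$ using Pascal's rule $\binom{N}{\alpha N}=\binom{N-1}{\alpha N-1}+\binom{N-1}{\alpha N}$, which also ties the adaptive branch to the uniform branch. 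Consequently, for a fixed $A$, the quantity $Pr[\mathcal{M}(\mathbf{e}_j)=A]$ takes only two values as $j$ varies: a larger value $P_{\mathrm{in}}$ when $j\in A$ and a smaller value $P_{\mathrm{out}}$ when $j\notin A$. For any two distinct inputs $\mathbf{e}_j,\mathbf{e}_k$ and any $A$, the ratio therefore lies in $\{1,\,P_{\mathrm{in}}/P_{\mathrm{out}},\,P_{\mathrm{out}}/P_{\mathrm{in}}\}$, so it suffices to bound $P_{\mathrm{in}}/P_{\mathrm{out}}$ by $\gamma$.

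The crux of the argument, and the step I expect to require the most care, is showing $P_{\mathrm{in}}/P_{\mathrm{out}}\le\gamma$ in the presence of the extra additive term contributed by the first-stage sampling. Writing $P_{\mathrm{in}}=p\gamma/D+(1-p)\binom{N}{\alpha N}^{-1}$ and $P_{\mathrm{out}}=p/D+(1-p)\binom{N}{\alpha N}^{-1}$, the inequality $P_{\mathrm{in}}\le\gamma P_{\mathrm{out}}$ reduces, after clearing denominators and cancelling the common $p\gamma/D$ term, to $(1-p)\binom{N}{\alpha N}^{-1}\le\gamma(1-p)\binom{N}{\alpha N}^{-1}$, i.e.\ to $1\le\gamma$, which holds by assumption. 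The conceptual reason is that the shared positive constant $(1-p)\binom{N}{\alpha N}^{-1}$ originates from the uniform (symmetric) branch and can only pull the ratio toward $1$; the worst case is attained by the purely adaptive branch, whose in/out likelihood ratio is exactly $\gamma$. Equivalently, one may argue that the adaptive map $\widehat{\mathbf{t}}_i\mapsto A_i$ is itself $\log\gamma$-LDP on $\{\mathbf{0},\mathbf{e}_1,\dots,\mathbf{e}_N\}$ and that $\mathcal{M}(\mathbf{e}_j)$ is a mixture $p\,g(\mathbf{e}_j)+(1-p)\,g(\mathbf{0})$ of laws, so that the bound is preserved by a convexity argument on numerator and denominator.

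Combining the two cases, I would conclude $Pr[\mathcal{M}(\mathbf{v})=A]\le e^{\epsilon^*}\,Pr[\mathcal{M}(\mathbf{v}')=A]$ for all inputs and outputs, with $\epsilon^*=0$ when $\chi=\chi_U$ and $\epsilon^*=\log\gamma$ when $\chi=\chi_A$, which is exactly the claimed guarantee. The only genuine subtlety, beyond the routine case enumeration, is confirming that the first-stage Bernoulli sampling does not inflate the leakage beyond $\log\gamma$; the mixture computation above settles this.
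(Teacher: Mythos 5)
Your proof is correct and follows the natural (and essentially the paper's) route: verify the local-DP ratio pointwise, with the uniform case trivially giving ratio $1$, and the adaptive case reducing to the worst-case comparison $P_{\mathrm{in}}/P_{\mathrm{out}}\leq\gamma$. Your treatment of the first-stage Bernoulli sampling as a mixture with the uniform law --- and the observation that the common term $(1-p)\binom{N}{\alpha N}^{-1}$ can only pull the ratio toward $1$, so the leakage never exceeds $\log\gamma$ --- is exactly the care this step requires.
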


Theorem \ref{s1lemme} states that for uniform sampling, the user's information is indeed perfectly hidden to $\mathcal{S}_1.$ In the situation where adaptive sampling is used, on the other hand, as the user's true item has a higher chance to be selected, such information can be used to infer some statistical information on the user's true item. However, the sampling process is shown to satisfy local differential privacy definition where the privacy disclosure is bounded by $\log \gamma.$ In the following, we consider the privacy provided by $\mathtt{TSS}$ against $\mathcal{S}_2.$  
\begin{restatable}{theorem}{lemTSStwo}\label{lem:2SSP>DPDS}
The mechanism $\mathtt{TSS}$ provides at least the same privacy level as $\mathtt{DPDS}$ against $\mathcal{S}_2$ given the same sampling probability regardless of the selected reporting distribution $\chi.$
\end{restatable}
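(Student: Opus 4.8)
The plan is to reduce the view of $\mathcal{S}_2$ under $\mathtt{TSS}$ to a collection of per-item aggregated counts, show that each such count is governed by a strictly smaller effective sampling probability than the one used in $\mathtt{DPDS}$, and then invoke the monotonicity of the sampling-based guarantee in the sampling probability. First I would pin down exactly what $\mathcal{S}_2$ learns. Since $\mathcal{S}_1$ and $\mathcal{S}_2$ do not collude, the adversary is $\mathcal{S}_2$ alone, which receives the published elected-user lists $(j,u_{j,1},\dots,u_{j,m_j})$ from $\mathcal{S}_1$ and the partial sums $(j,s_{j,t})$ from the elected users. For each item $j$, the $m_j$ values $s_{j,1},\dots,s_{j,m_j}$ form a valid additive secret sharing of $s_j=\sum_{i:j\in A_i}\widehat{t_{i,j}}$ over $\mathbb{F}_q$; being uniform subject only to summing to $s_j$, they reveal nothing to $\mathcal{S}_2$ beyond the aggregate $s_j$. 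The elected helpers are chosen by $\mathcal{S}_1$ independently of the data, so the only data-dependent part of $\mathcal{S}_2$'s view is the pair $(m_j,s_j)_{j=1}^N$. This is precisely why $\mathcal{S}_1$ is introduced: it strips the identities of the reporting users, leaving $\mathcal{S}_2$ with an aggregate of the same form as the output $\widehat{\boldsymbol\tau}$ of $\mathtt{DPDS}$.

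Next I would compute the effective sampling probability of $s_j$. A user contributes a one to $s_j$ only if she actively participates in the first stage (probability $p$) and item $j$ survives the second-stage reporting sampling (probability $p_\chi$, equal to $\alpha$ for $\chi_U$ and to $\frac{\alpha\gamma}{\alpha\gamma+1-\alpha}$ for $\chi_A$). Hence each user holding item $j$ is counted independently with probability $q_\chi=p\,p_\chi$. Because the encodings are one-hot, distinct coordinates $s_j$ depend on disjoint sets of users, so $(s_j)_{j=1}^N$ has exactly the joint distribution of the output $\widehat{\boldsymbol\tau}$ of $\mathtt{DPDS}$ (equivalently $\mathtt{DPCS}^*$) run with sampling probability $q_\chi$ in place of $p$. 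Since $p_\chi\le 1$, we have $q_\chi\le p$ for both reporting distributions, which is where the ``regardless of $\chi$'' part of the conclusion originates.

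Finally I would conclude by monotonicity. The analyses in Theorems~\ref{lem1} and~\ref{lem2} tie the sampling probability to the attained budget through relations such as $p=1-e^{-\epsilon}$, i.e.\ $\epsilon=-\ln(1-p)$, so the guarantee strengthens as the sampling probability decreases; replacing $p$ by the smaller $q_\chi$ therefore yields a privacy guarantee at least as strong as that of $\mathtt{DPDS}$. For $\chi_U$ this finishes the proof: the indicator $j\in A_i$ is drawn independently of the user's value, so the distribution of $(m_j)_{j=1}^N$ coincides for any pair of neighbouring datasets and contributes no privacy loss, leaving the loss to be carried entirely by $(s_j)_{j=1}^N$ at the reduced rate $q_\chi\le p$.

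I expect the main obstacle to be the adaptive case. Under $\chi_A$ the reported set, and hence each $m_j$, is correlated with the user's true item, so $(m_j)_{j=1}^N$ now differs across neighbouring datasets and introduces leakage beyond that of the counts $(s_j)_{j=1}^N$; the clean reduction to $\mathtt{DPDS}$ at rate $q_\chi$ no longer applies verbatim. The crux is to show that this extra leakage, which is controlled by the $\log\gamma$ local bound of Theorem~\ref{s1lemme}, is dominated by the strengthening obtained from shrinking the effective rate from $p$ to $q_\chi$, so that the combined guarantee still dominates that of $\mathtt{DPDS}$. Verifying this comparison carefully, while keeping track of the $\delta$ and $\beta$ conditions of Theorems~\ref{lem1} and~\ref{lem2}, is the delicate step.
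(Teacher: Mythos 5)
Your reduction of $\mathcal{S}_2$'s view to the pairs $(m_j,s_j)_{j=1}^N$ and the identification of the effective rate $q_\chi=p\,p_\chi\le p$ (consistent with Lemma~\ref{lem:probuseful}) are sound, and your uniform-sampling case is essentially complete. One repair is needed even there: you justify monotonicity by "replacing $p$ by $q_\chi$" in the relation $p=1-e^{-\epsilon}$ of Theorems~\ref{lem1}/\ref{lem2}, but those theorems carry a data-distribution hypothesis whose required $\beta$ \emph{grows} as the effective budget shrinks (the factor $(e^{-\epsilon}-e^{-2\epsilon})^{-1}$ blows up as $\epsilon\to0$), so a smaller rate does not automatically re-satisfy their conditions on the same dataset. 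The clean argument is binomial thinning: $\mathrm{Bin}(\mathrm{Bin}(\Pi_j,p),p_\chi)=\mathrm{Bin}(\Pi_j,q_\chi)$, so the rate-$q_\chi$ count vector is an exact data-independent post-processing of the rate-$p$ output of $\mathtt{DPDS}$, and Proposition~\ref{prop:postprocessing} transfers \emph{whatever} guarantee $\mathtt{DPDS}$ has, with no side conditions.

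The genuine gap is the adaptive case, which you name but do not close --- and it is exactly the case the theorem's "regardless of $\chi$" is about. Worse, the route you sketch for closing it cannot work as stated: you propose to show that the $m_j$-leakage, "controlled by the $\log\gamma$ bound of Theorem~\ref{s1lemme}," is dominated by the strengthening from $p$ to $q_\chi$. Charging the per-user $\log\gamma$ cost against $\mathcal{S}_2$ and comparing budgets gives the requirement $\epsilon(q_\chi)+\log\gamma\le\epsilon(p)$, i.e.\ $\gamma\le\frac{1-q_\chi}{1-p}$ under the $p=1-e^{-\epsilon}$ calibration; this fails for almost all parameters (e.g.\ $p=0.5$, $\alpha=0.5$, $\gamma=3$ gives $q_\chi=0.375$ and $\frac{1-q_\chi}{1-p}=1.25<3$), and indeed as $\gamma\to\infty$ with $\alpha\gamma\gg1$ we have $q_\chi\to p$ (no strengthening at all) while $\log\gamma\to\infty$. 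The $\log\gamma$ bound is the right quantity for $\mathcal{S}_1$, who sees each individual set $A_i$, but the wrong quantity for $\mathcal{S}_2$, who sees only the population aggregates $m_j=|\{i:j\in A_i\}|$; the leakage of one user's value through $m_j$ is damped by the randomness of the other $n-1$ users' sets, in the same way the sampling noise damps the leakage through $s_j$. Closing the adaptive case therefore requires an aggregate-level likelihood-ratio analysis of the joint vector $(m_1,\dots,m_N,s_1,\dots,s_N)$ under neighbouring datasets, of the same type as the proofs of Theorems~\ref{lem1} and~\ref{lem2} (with their $\beta$-style hypotheses), not a composition of the local bound of Theorem~\ref{s1lemme} with the thinned sampling guarantee. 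As it stands, your proposal proves the theorem only for $\chi=\chi_U$.
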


\noindent \textbf{Utility Analysis:}

Consider the random variable $\widehat{\psi_j}$ that estimates the actual count of item $I_j.$ We note that since it is impossible for holders of any other items to affect the count $s_j,$ we can consider each random variable $\widehat{\psi_j}$ independently of each other. 
Lemma \ref{lem:probuseful} shows the distribution of the final estimation.

\begin{lemma}\label{lem:probuseful}
Fix $i\in\{1,\cdots, N\}$ and suppose that a user $u_j$ holds item $I_i$ and he decides to actively participate in the $\mathtt{TSS}$ mechanism. Suppose further that he uses the distribution $\chi$ to determine $A_j,$ the $\alpha N$ items he will report to. Then the probability that $i\in A_j$ is $p_\chi$ where
$$p_\chi=\left\{
\begin{array}{cc}
\alpha,&\mathrm{~if~} \chi=\chi_U\\
\frac{\alpha\gamma}{\alpha\gamma+1-\alpha},&\mathrm{~if~} \chi=\chi_A
\end{array}
\right. .$$
Hence, since the probability that a user $u_j$ actively participates in $\mathtt{TSS}$ is $p,$ we have that for any user $u_j$ that holds item $I_i,$ the probability that he can affect the count of $s_j$ is $q_\chi\triangleq p_\chi\cdot p.$

\end{lemma}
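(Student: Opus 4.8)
The plan is to treat the claim as two separate combinatorial probability computations, one for each reporting distribution, each conditioned on the user being active and holding item $I_i$, and then to fold in the first-stage Bernoulli trial via independence. First I would handle the uniform distribution $\chi_U$. Here $A_j$ is drawn uniformly from $S_{N,\alpha N}$, so $Pr(i\in A_j)$ is simply the fraction of size-$\alpha N$ subsets of $[N]$ that contain the fixed index $i$. Counting these as $\binom{N-1}{\alpha N-1}$ out of a total of $\binom{N}{\alpha N}$ and simplifying the ratio of binomial coefficients gives $\binom{N-1}{\alpha N-1}/\binom{N}{\alpha N}=\frac{\alpha N}{N}=\alpha$, which is the asserted $p_{\chi_U}$.

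For the adaptive distribution $\chi_A$, I would partition $S_{N,\alpha N}$ into the sets that contain $i$ and those that do not, and sum $\zeta_A$ over the former. Since the user is active and holds $I_i$, we have $\widehat{\mathbf{t}}_j=\mathbf{e}_i$, so each of the $\binom{N-1}{\alpha N-1}$ sets containing $i$ carries probability $\frac{\gamma}{\gamma\binom{N-1}{\alpha N-1}+\binom{N-1}{\alpha N}}$. Summing yields
\[
Pr(i\in A_j)=\frac{\gamma\binom{N-1}{\alpha N-1}}{\gamma\binom{N-1}{\alpha N-1}+\binom{N-1}{\alpha N}}.
\]
The key step is the identity $\binom{N-1}{\alpha N}=\frac{1-\alpha}{\alpha}\binom{N-1}{\alpha N-1}$, which I would verify by expanding factorials; substituting it and cancelling the common factor $\binom{N-1}{\alpha N-1}$ collapses the expression to $\frac{\alpha\gamma}{\alpha\gamma+1-\alpha}$, the asserted $p_{\chi_A}$.

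Finally, to obtain $q_\chi$, I would combine the second-stage result with the first-stage sampling. The first-stage Bernoulli trial determining active participation (with success probability $p$) is independent of the subsequent choice of $A_j$. A user holding $I_i$ contributes a nonzero value $\widehat{t_{j,i}}=1$ to the count $s_i$ only when he is active \emph{and} reports item $I_i$; an inactive user has $\widehat{\mathbf{t}}_j=\mathbf{0}$ and so contributes $0$ regardless of $A_j$, while an active user with $i\notin A_j$ does not report item $I_i$ at all. By independence and the multiplication rule, the probability of both events is $p\cdot p_\chi=q_\chi$. The only mildly technical point is the binomial simplification in the adaptive case; otherwise the argument is a direct counting calculation, so I do not anticipate a substantial obstacle.
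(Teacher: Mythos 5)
Your proof is correct: the uniform-case ratio $\binom{N-1}{\alpha N-1}/\binom{N}{\alpha N}=\alpha$, the adaptive-case identity $\binom{N-1}{\alpha N}=\frac{1-\alpha}{\alpha}\binom{N-1}{\alpha N-1}$, and the final factor $p\cdot p_\chi$ (which is really the chain rule $\Pr[\text{active}]\cdot\Pr[i\in A_j\mid\text{active}]$ rather than independence, since in the adaptive case the choice of $A_j$ depends on the active/inactive encoding) all check out. This is essentially the same direct counting computation underlying the paper's own (deferred) proof, so nothing further is needed.
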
 

Such distribution can then be used to provide the statistical accuracy analysis for $\mathtt{TSS}.$

\begin{theorem}\label{lem:2SSPUnbiased}
For $\mathtt{TSS},$ suppose that for $i=1,\cdots, N,$ there are $\Pi_i$ users having the value $\mathbf{e}_i.$ In other words, the value we want to achieve is ${\boldsymbol \psi}=\left(\frac{\Pi_1}{n},\cdots, \frac{\Pi_N}{n}\right).$ Then $\widehat{\boldsymbol \psi}$ is an unbiased estimator of ${\boldsymbol\psi}$ and $\mathrm{Var}\left(\widehat{\boldsymbol\psi}\right)=\frac{1-q_\chi}{nq_\chi}.$
\end{theorem}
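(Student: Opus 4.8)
The plan is to analyze each coordinate estimator $\widehat{\psi_j}$ on its own, exploiting the fact (already noted before the statement) that holders of items other than $I_j$ cannot influence the count $s_j$, and then to combine the coordinate-wise variances into the scalar $\frac{1-q_\chi}{nq_\chi}$. I read this scalar as the sum of the componentwise variances, equivalently the expected squared $L_2$ error $\mathbb{E}(\|\widehat{\boldsymbol\psi}-\boldsymbol\psi\|_2^2)$, in keeping with the convention of Lemma~\ref{ufDPDS}; I would state this interpretation explicitly at the outset.

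First I would unwind the definition of $s_j$ through the secret-sharing layer. By the additive property of the secret-sharing scheme, the shares $s_{j,i,1},\dots,s_{j,i,m_j}$ reconstruct $\widehat{t_{i,j}}$, so that $s_{j,t}=\sum_{i:j\in A_i}s_{j,i,t}$ and hence $s_j=\sum_{t=1}^{m_j}s_{j,t}=\sum_{i:j\in A_i}\widehat{t_{i,j}}$. Since $\widehat{t_{i,j}}$ is nonzero only when $u_i$ both actively participates and holds item $I_j$, I would conclude that $s_j$ counts exactly those users who hold $I_j$, actively participate, and report $I_j$ (i.e.\ satisfy $j\in A_i$). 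This is the step where holders of other items drop out, pinning the relevant population at $\Pi_j$ rather than $n$.

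Next I would introduce, for each of the $\Pi_j$ users holding item $I_j$, an indicator $X_i$ equal to $1$ precisely when that user is active and $j\in A_i$. By Lemma~\ref{lem:probuseful} each $X_i$ is Bernoulli with success probability $q_\chi$, and these indicators are independent across users, since each depends only on that user's private participation bit and second-stage sampling randomness. Therefore $s_j=\sum_i X_i$ is Binomial$(\Pi_j,q_\chi)$, giving $\mathbb{E}[s_j]=\Pi_j q_\chi$ and $\mathrm{Var}(s_j)=\Pi_j q_\chi(1-q_\chi)$. Scaling by $\frac{1}{q_\chi n}$ immediately yields $\mathbb{E}[\widehat{\psi_j}]=\frac{\Pi_j}{n}=\psi_j$, establishing unbiasedness, and $\mathrm{Var}(\widehat{\psi_j})=\frac{\Pi_j(1-q_\chi)}{q_\chi n^2}$.

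Finally I would sum the coordinate variances and use $\sum_{j=1}^N \Pi_j = n$ to obtain $\sum_{j=1}^N\mathrm{Var}(\widehat{\psi_j})=\frac{1-q_\chi}{q_\chi n^2}\sum_{j=1}^N \Pi_j=\frac{1-q_\chi}{nq_\chi}$, as claimed. I do not expect a serious obstacle, since the statistical core is already isolated in Lemma~\ref{lem:probuseful}; the only points requiring genuine care are the correct reconstruction of $s_j$ through the two layers of additive secret-sharing aggregation and the verification that holders of other items contribute exactly zero, which together fix the binomial parameter at $\Pi_j$ and make the scaling constant $\frac{1}{q_\chi n}$ the right one for unbiasedness.
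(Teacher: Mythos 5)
Your proposal is correct and follows essentially the same route as the paper: the paper's proof simply invokes the structure of Lemma~\ref{ufDPDS} with the per-coordinate count $s_j$ being Binomial$(\Pi_j, q_\chi)$ rather than Binomial$(\Pi_j, p)$, which is exactly your coordinate-wise binomial argument, including reading the scalar $\frac{1-q_\chi}{nq_\chi}$ as the summed componentwise variance (expected squared $L_2$ error) and using $\sum_j \Pi_j = n$. Your explicit unwinding of the secret-sharing layer to show $s_j=\sum_{i:j\in A_i}\widehat{t_{i,j}}$ is a useful addition the paper leaves implicit, but it does not change the substance of the argument.
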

\begin{proof}
This lemma can be shown using exactly the same proof structure as Lemma~\ref{ufDPDS} with the only difference being that for any $j=1,\cdots, N,$ the binomial random variable $\widehat{\psi_j}$ has a success probability of $q_\chi$ instead of $p.$ So following the proof idea from before, we obtain the desired claim. 
\end{proof}

\noindent \textbf{Complexity Analysis:}

Lastly, we discuss the computational and communication costs for different groups of participants of $\mathtt{TSS}.$ A summary of such requirements as well as its comparison with the requirement for $\mathtt{DPDS}$ can be found in Table~\ref{comparison}. A detailed discussion on the calculation of the complexity of $\mathtt{TSS}$ can be found in the Appendix. As can be observed, although we see an increase in complexity for the server, especially for the newly introduced server $\mathcal{S}_1,$ the average computational and communication complexity costs incurred by an average user is reduced approximately by a factor of $\frac{1}{p_\chi \alpha},$ achieving the objective of introducing the second stage of sampling.
\begin{table*}[ht]
    \centering
    \begin{threeparttable} 
    \begin{tabular}{|c|c|c|c|c|c|}
    \hline
         & \multicolumn{2}{|c|}{$\mathtt{DPDS}$} & \multicolumn{3}{|c|}{$\mathtt{TSS}$ } \\
         \hline
        & Server & User & $\mathcal{S}_1$ & $\mathcal{S}_2$ & User \\
        \hline
        Computation &\tabincell{l}{FA:($(n-1)N$)\\ RM: $N$} & \tabincell{l}{FA:($2(n-1)N$)\\ BS:$1$\\FS:$(n-1)N$}& \tabincell{l}{nA:($(n-1)N$)\\ nS:$\alpha n N$} &\tabincell{l}{FA:$(\alpha n -1)N$\\RM:$N$} & \tabincell{l}{FA:$(p_{\chi}n-1)(p_{\chi}+\alpha)N$\\BS:$1$, aS:$1,$ \\FS: $(p_{\chi}n-1)(\alpha N)$} \\
        \hline
        Communication &$0$ & qd:$2nN+N$ in $2$ rounds & \tabincell{l}{nd:$\alpha n N(n+1)$\\ Nd:$(n+1)N$\\in $1$ round}&$0$ & \tabincell{l}{bd:$N$\\ Nd:$\alpha N(1+ p_\chi n)$\\
        qd: $\alpha N(1+p_\chi n)$\\in $3$ rounds}   \\
        \hline
    \end{tabular}
     \begin{tablenotes}
        \footnotesize
        \item FA: Field addition; RM: Real number multiplication; BS: Bernoulli sampling, aS: Sampling from $S_{N,\alpha N}$ following $\chi,$ FS: Field Sampling, nA: Addition of elements in $[n],$ nS: Sampling from $[n],$ nd: sending elements of $[n]$, Nd: sending elements of $[N],$ bd: sending of binary elements, qd: sending elements of $\mathbb{F}_q$
      \end{tablenotes}
          \caption{Complexity cost requirements for $\mathtt{DPDS}$ and $\mathtt{TSS}$}
    \label{comparison}
  \end{threeparttable}
\end{table*}

\noindent\textbf{Security against Collusion of Users:}

We note that since participating users secretly share their private values to a randomly chosen set of parties, if the size of such set is not sufficiently large, there is a non-zero probability that all the chosen parties collude, which enable them to recover all the reporting users' private values. In order to avoid such case, we can make the following minor adjustment to Algorithm~\ref{2SSP} to provide privacy against collusion of users. Suppose that there is a positive integer $\phi$ such that the maximum number of colluding parties is $\phi.$ Then we can provide privacy guarantee if for any item, there are at least $\phi+1$ elected users to assist in the reporting process to $\mathcal{S}_2.$ This can be done by making the following adjustments to Lines $13$ up to $20.$ 

\begin{algorithm}[!htbp]
\caption{Two-stage sampling ($\mathtt{TSS}'$)}\label{2SSPPrime}
\hspace*{0.02in} {\bf Input:}
Private values of $n$ users $\mathbf{t}_i\in\{\mathbf{e}_1,\cdots, \mathbf{e}_N\}\subseteq \mathbb{F}_q^N,$ sampling probability $p,$ reporting proportion $\alpha,$ reporting distribution $\chi\in\{\chi_U,\chi_A\}$ (with possible reporting parameter $\gamma$ if necessary), reporter bound $\phi,$ and privacy budget $\epsilon;$\\
\hspace*{0.02in} {\bf Output:}
Estimate of normalized frequency of each value $\widehat{\boldsymbol \psi};$
\begin{algorithmic}[1]
\STATE \COMMENT{Follow the first 11 lines of $\mathtt{TSS}$ in Algorithm~\ref{2SSP}}
\FOR{$j=1,\cdots, N$}
\STATE $\mathcal{S}_1$ sets $m_j^\prime=\max(\phi+1,m_j);$
\STATE $\mathcal{S}_1$ randomly selects $m_j^\prime$ users $u_{j,1},\cdots, u_{j,m_j};$
\STATE $\mathcal{S}_1$ publishes $(j,m_j,u_{j,1},\cdots, u_{j,m_j^\prime})$ to all users and the aggregating server $\mathcal{S}_2;$
\FOR{$i$ such that $j\in A_i$}
\STATE $u_i$ with encoded response for item $I_j, \widehat{t_{i,j}}$ samples $m^\prime_j-1$ random elements of $\mathbb{F}_q,$ denoted by $s_{j,i,1},\cdots,s_{j,i,m^\prime_j-1}\in \mathbb{F}_q$ and sets $s_{j,i,m^\prime_j}= \widehat{t_{i,j}}-\sum_{t=1}^{m^\prime_j-1} s_{j,i,t}\in \mathbb{F}_q;$
\STATE $u_i$ sends $(j,s_{j,i,t})$ to $u_{j,t}$ for $t=1,\cdots, m_j^\prime;$
\ENDFOR
\STATE \textit{[For elected users $u_{j,1},\cdots, u_{j,m^\prime_j}$]}
\STATE Upon receiving $m_j$ values $s_{j,i,t}$ for $i$ such that $j\in A_i,$ $u_{j,t}$ computes $s_{j,t}=\sum_{i:j\in A_i} s_{j,i,t}$ and sends $(j,s_{j,t})$ to $\mathcal{S}_2;$
\STATE \textit{[$\mathcal{S}_2$ Side]}
\STATE $\mathcal{S}_2$ calculates the estimate of the count for item $j$ as $s_j=\sum_{t=1}^{m_j^\prime} s_{j,t}$ and treat it as a real number;
\STATE $\mathcal{S}_2$ performs some post-processing to obtain the unbiased estimator for the count of item $I_j$ as $\widehat{\psi_j}=\frac{1}{q_\chi n}s_j$ where $q_\chi=p p_\chi$ and
\[p_\chi=\left\{
\begin{array}{cc}
    \alpha, &\mathrm{~if~}\chi=\chi_U  \\
     \frac{\alpha\gamma}{\alpha\gamma+1-\alpha},&\mathrm{~if~}\chi=\chi_A 
\end{array}
\right. .
\]

\ENDFOR
\end{algorithmic}
\end{algorithm}

It is easy to see that such minor adjustments do not affect the privacy against any of the servers nor the utility of $\mathtt{TSS}.$ However, it provides a privacy guarantee of any user's private value against any collusion of up to $\phi$ other users. This shows that $\mathtt{TSS}'$ provides privacy against an honest-but-curious adversary controlling either $\mathcal{S}_1,\mathcal{S}_2,$ or any $\phi$ out of the $n$ users. We note that here the privacy guarantee is no longer applicable if the adversary controls two servers or any one server along with the $\phi$ users. Furthermore, the adjustments done in $\mathtt{TSS}'$ affects the complexity for the participants. We note that for any $j$ such that $m_j<\phi+1,$ it incurs extra $\phi+1-m_j$ sampling over $[n]$ and publication of the value of $m_j\in [n]$ to $\mathcal{S}_1.$ Similarly, for any $j$ such that $m_j<\phi+1,$ it incurs an extra $\phi+1-m_j$ field addition operations to $\mathcal{S}_2.$ Lastly, for any $j$ such that $m_j<\phi+1,$ users that decide to report his $j$-th value get $\phi+1-m_j$ more field sampling, $\phi+1-m_j$ more field additions and $\phi+1-m_j$ field elements to be sent. For users that are elected to assist the reporting process of item $j,$ they will need additional $\phi+1-m_j$ field additions.

\subsection{Weighted Aggregation for hybrid privacy level}
Having one level of privacy guarantee for all the users of a mechanism may provide a simpler mechanism to analyze. However, since different users may have different privacy preferences, such practice may sacrifice statistical accuracy by providing a higher level of privacy guarantee than what are required by users. A natural way to improve the statistical accuracy is to have few levels of privacy settings, such as \textit{Very Strong}, \textit{Strong}, \textit{Normal}, and \textit{Weak} for the users to choose depending on their respective privacy preferences. Such solution is not necessarily new. There have been some studies on the effect of having multiple levels of privacy on the statistical accuracy of the mechanism \cite{akter2017computing,ye2019multiple}. However, in such studies, aggregation is done at once regardless of the choice of the level of distortion for each users' report. Although such aggregation technique yields a good level of statistical accuracy, it is not optimal in some cases, for example, when we assume that the private data comes from a fixed distribution. Intuitively, the statistical accuracy of the aggregation can be improved further if reports with smaller levels of distortion have larger weights in the aggregation. In order to formalize such aggregation method, we propose a general theoretical framework for data aggregation with different levels of perturbation.

Here we assume that the users' data are independently and identically distributed random variables with expected value equal to the desired average $\bar{v}.$ This assumption applies in scenario where the users' data follow a fixed distribution, such as the human weight data that follows a Gaussian distribution \cite{a2009height},
 web page click frequency, and node degrees in social networks which follow a power-law distribution \cite{clauset2009power,gong2012evolution}.
We further assume that the output of each user is an unbiased estimator of his true value, which can be achieved by using any unbiased perturbation mechanism. 
The general idea is that we assign different weights to each group to calibrate the statistical mean. 
We assume that there are $\mu$ different privacy levels $\{\epsilon_1,\cdots, \epsilon_\mu\}$ where the $j$-th privacy level is chosen by $n_j$ users. For any $j=1,\cdots, \mu$ and $k=1,\cdots, n_j,$ we denote the $k$-th user choosing the $j$-th privacy level as $u_{j,k}.$ Suppose that we assign the weight $w_j\in \mathbb{R}_{>0}$ to the $j$-th privacy level. If the output of $u_{j,k}$ is $\tilde{v}_{j,k},$ the average $\bar{v}$ is estimated as $
 \widehat{\overline{v}}=\frac{\sum_{j=1}^\mu w_j \sum_{k=1}^{n_j} \tilde{v}_{j,k}}{\sum_{j=1}^\mu n_j w_j}.
$

We determine the values of $w_j$ that provides high accuracy, which highly depends on the variance of each estimation $\tilde{v}_{j,k}.$ For simplicity, we assume that the variance of $\tilde{v}_{j,k}$ is the same for users with the same privacy level $\epsilon_j$ and let such variance be denoted by $V_j.$ Such assumption is applicable for the majority of the existing DP mechanisms.

We expect to get a more accurate estimation by assigning more weights to users who choose a larger privacy budget. We evaluate the statistical error by calculating its expected squared error, $\Delta=\mathbb{E}\left[\left(\frac{\sum_{j=1}^\mu w_j \sum_{k=1}^{n_j} \tilde{v}_{j,k}}{\sum_{j=1}^\mu n_j w_j}-\bar{v}\right)^2\right]$ which can be computed as 
 \begin{equation}\label{eq:E=0} 
 \Delta=\mathrm{Var}\left[\frac{\sum_{j=1}^\mu w_j \sum_{k=1}^{n_j} \tilde{v}_{j,k}}{\sum_{j=1}^\mu n_j w_j}\right]
\end{equation}
or $\Delta= \frac{1}{(\sum_{j=1}^{\mu} n_j w_j)^2}\sum_{j=1}^{\mu} w_j^2 n_j V_j.$
Here Equation~\eqref{eq:E=0} comes from the fact that $\tilde{v}_{j,k}$ is an unbiased estimator of $v_{j,k}$ which has an expected value of $\bar{v}.$ 
 
To correct the statistical error, we propose two solutions to get a highly accurate statistical mean. 

\noindent\textit{Solution 1.} We model the weight assignment process as an optimization problem, 
\[\min\left\{\frac{\sum_{j=1}^\mu n_i w_i^2V_j}{(\sum_i^\mu n_i w_i)^2}: \sum_i^{\mu} w_i=1, 0<w_i<1, \forall i\in [\mu]\right\}\]
Then we can numerically solve the optimization problem using any existing optimization algorithm such as gradient descent.

\noindent\textit{Solution 2.} Treating the expected squared error formula as a multi-variable function of the weights, we may consider its critical points for possible local minima. One such critical points is $(w_1,\cdots, w_\mu)$ where, for $i=1,\cdots, \mu,$
\begin{equation}\label{eq:goodweight}
    w_i=\frac{1/V_i}{\sum_{j=1}^{\mu}\frac{1}{V_j}}.
\end{equation}

\begin{lemma}\label{lem:minw}
Define a function $f:\mathbb{R}^{\mu}\rightarrow \mathbb{R}$ such that $f(w_1,\cdots, w_\mu)=\frac{1}{(\sum_{j=1}^{\mu} n_j w_j)^2}\sum_{j=1}^{\mu} w_j^2 n_j V_j.$
The critical points of $f$ are of the form $c\cdot\left(\frac{1}{V_1},\cdots,\frac{1}{V_\mu}\right)$ for $c\in \mathbb{R}.$
\end{lemma}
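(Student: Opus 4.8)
The plan is to compute the gradient of $f$ directly and characterize exactly where it vanishes. First I would introduce shorthand for the two building blocks: let $S=\sum_{j=1}^\mu n_j w_j$ denote the (linear) denominator factor and $T=\sum_{j=1}^\mu w_j^2 n_j V_j$ the (quadratic) numerator, so that $f=T/S^2$ on the open region where $S\neq 0$. The relevant partial derivatives are then elementary, namely $\partial S/\partial w_i = n_i$ and $\partial T/\partial w_i = 2 n_i V_i w_i$, which keeps the computation from becoming unwieldy.

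Applying the quotient rule and simplifying gives, for each $i$,
\[
\frac{\partial f}{\partial w_i} = \frac{(2 n_i V_i w_i)S^2 - T\cdot 2S\, n_i}{S^4} = \frac{2 n_i\left(V_i w_i S - T\right)}{S^3}.
\]
Since each $n_i>0$ and we work where $S\neq 0$, requiring $\partial f/\partial w_i=0$ for every $i$ is equivalent to the system $V_i w_i S = T$ for all $i=1,\ldots,\mu$. The crux of the argument, which I would highlight as the one non-mechanical step, is the observation that the right-hand side $T$ (and likewise $S$) carries no dependence on the index $i$. Therefore the quantity $V_i w_i = T/S$ is forced to take the same value, say $c:=T/S$, across all $i$, which immediately yields $w_i = c/V_i$ and hence that every critical point has the claimed form $c\cdot(1/V_1,\ldots,1/V_\mu)$.

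To close the argument I would verify consistency, confirming that these points are genuine critical points rather than artefacts of the algebra: substituting $w_i=c/V_i$ gives $S = c\sum_{j}n_j/V_j$ and $T = c^2\sum_{j}n_j/V_j$, so that $T/S = c$, in agreement with the definition of $c$. I do not anticipate a real obstacle here; the only point requiring care is to restrict attention to the domain $S\neq 0$ (equivalently $c\neq 0$), since $f$ is undefined at the origin, so the degenerate value $c=0$ should be read as a limiting case rather than an actual critical point.
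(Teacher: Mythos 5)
Your proposal is correct and follows essentially the same route as the paper: both compute the partial derivatives via the quotient rule, reduce $\partial f/\partial w_i=0$ to $V_i w_i S = T$, and conclude that $w_iV_i$ must be constant in $i$ (the paper subtracts pairs of equations where you divide by $S$, a cosmetic difference). Your explicit consistency check of the converse and the remark about excluding $c=0$ (where $S=0$ and $f$ is undefined) are small refinements the paper leaves implicit.
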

\begin{proof}
For $t=1,\cdots, \mu,$ define $f_t\triangleq\frac{\partial f}{\partial w_t},$ the partial derivative of $f$ with respect to $w_t.$ Then
\[
    f_t=\frac{2n_t}{(\sum_{j=1}^\mu w_jn_j)^2}\left[w_t V_t-\frac{\sum_{j=1}^{\mu} w_j^2 n_j V_j}{\sum_{j=1}^\mu n_j w_j}\right].
\]
So letting $f_t=0$ and subtracting $f_{t'}$ from $f_t$ for any $t'\neq t,$ we get 
\begin{equation}\label{eq:ft-tprime}
    (w_t V_t- w_{t'} V_{t'})\left(\sum_{j=1}^{\mu}w_j n_j\right)=0.
\end{equation}
Since $\sum_{j=1}^\mu w_j n_j >0,$ Equation~\eqref{eq:ft-tprime} implies that $w_t V_t= w_{t'}V_{t'}$ for any $t\neq t'.$ This shows that $w_t V_t$ is constant or any $t.$ Let such constant be $c.$ This shows that $f_t=0$ for any $t$ if and only if there exists $c$ such that $w_t=\frac{c}{V_t}.$ So the critical points of $f$ are in the form $c\cdot\left(\frac{1}{V_1},\cdots,\frac{1}{V_\mu}\right).$ 
\end{proof}

Note that although the given $(w_1,\cdots, w_\mu)$ in Eq.~\eqref{eq:goodweight} is a critical point of the function $f,$ it only tells us that the point $(w_1,\cdots, w_\mu)$ may be a local minimum of the function. However, this does not guarantee that using such weights provides a smaller expected error compared to an unweighted estimator, i.e. $w_i=1$ for all $i=1,\cdots, \mu.$ The following lemma confirms that when we set $w_i=\frac{1/V_i}{\sum_{j=1}^\mu 1/V_j},$ the expected error is indeed smaller than the expected error when we set $w_i=1$ for all $i=1,\cdots,\mu.$

\begin{lemma}\label{lem:minwbetter}
For $i=1,\cdots, \mu,$ let $w_i$ be as defined in Eq.~\eqref{eq:goodweight}. Then $f(w_1,\cdots, w_\mu)\leq f(1,\cdots,1).$ Furthermore, equality holds if and only if $V_i=V_j$ for any $i,j\in\{1,\cdots, \mu\}.$
\end{lemma}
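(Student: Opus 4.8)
The plan is to evaluate $f$ explicitly at both points and reduce the claimed inequality to the Cauchy--Schwarz inequality. First I would exploit that $f$ is homogeneous of degree zero: for any nonzero scalar $c$, one has $f(cw_1,\dots,cw_\mu)=f(w_1,\dots,w_\mu)$, since both $\sum_j w_j^2 n_j V_j$ and $(\sum_j n_j w_j)^2$ pick up a common factor $c^2$. Consequently the normalization $\sum_i w_i=1$ built into \eqref{eq:goodweight} plays no role in the value of $f$, and I may evaluate $f$ at the unnormalized critical point $\left(\tfrac{1}{V_1},\dots,\tfrac{1}{V_\mu}\right)$, which is simpler. Substituting $w_j=1/V_j$ gives numerator $\sum_j (1/V_j^2)\,n_j V_j=\sum_j n_j/V_j$ and denominator $\left(\sum_j n_j/V_j\right)^2$, so that $f$ at the optimal weights equals $\left(\sum_j n_j/V_j\right)^{-1}$. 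On the other hand, $f(1,\dots,1)=\frac{\sum_j n_j V_j}{(\sum_j n_j)^2}$.

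With these two closed forms in hand, the inequality $f(w_1,\dots,w_\mu)\le f(1,\dots,1)$ becomes, after cross-multiplying the strictly positive quantities $\sum_j n_j/V_j$ and $(\sum_j n_j)^2$,
\[
\left(\sum_{j=1}^\mu n_j\right)^2\le \left(\sum_{j=1}^\mu n_j V_j\right)\left(\sum_{j=1}^\mu \frac{n_j}{V_j}\right).
\]
This is exactly the Cauchy--Schwarz inequality applied to the vectors with entries $a_j=\sqrt{n_j V_j}$ and $b_j=\sqrt{n_j/V_j}$, because $a_jb_j=n_j$ and hence $\left(\sum_j a_jb_j\right)^2=\left(\sum_j n_j\right)^2$, while $\sum_j a_j^2=\sum_j n_jV_j$ and $\sum_j b_j^2=\sum_j n_j/V_j$. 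This establishes the desired bound.

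For the equality characterization I would invoke the equality case of Cauchy--Schwarz: equality holds if and only if $(a_j)_j$ and $(b_j)_j$ are linearly dependent, i.e.\ $\sqrt{n_jV_j}=\lambda\sqrt{n_j/V_j}$ for all $j$ and some constant $\lambda$. Since each $n_j>0$, this forces $V_j=\lambda$ for every $j$, which is precisely the condition $V_i=V_j$ for all $i,j\in\{1,\dots,\mu\}$; conversely, if all $V_j$ are equal the two expressions for $f$ coincide. There is no serious obstacle here: the argument is a direct computation followed by Cauchy--Schwarz. The only points requiring a little care are the degree-zero homogeneity observation (which justifies dropping the normalization and lets the optimal-weight value simplify to the harmonic-type form $1/\sum_j n_j/V_j$) and the faithful translation of the proportionality condition in the Cauchy--Schwarz equality case into the statement ``all $V_j$ equal,'' using $n_j>0$.
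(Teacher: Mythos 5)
Your proposal is correct and takes essentially the same approach as the paper: both evaluate $f$ in closed form at the optimal weights and at $(1,\dots,1)$, obtaining $\frac{1}{\sum_j n_j/V_j}$ and $\frac{\sum_j n_j V_j}{\left(\sum_j n_j\right)^2}$ respectively, and reduce the claim to the inequality $\left(\sum_j n_j\right)^2\le\left(\sum_j n_j V_j\right)\left(\sum_j n_j/V_j\right)$. The only difference is the final step: the paper expands both sides so the claim becomes $\sum_{i<j}2n_in_j\le\sum_{i<j}n_in_j\left(\frac{V_i}{V_j}+\frac{V_j}{V_i}\right)$ and applies $x+\frac{1}{x}\ge 2$ termwise (with equality iff $V_i=V_j$), whereas you invoke Cauchy--Schwarz together with its equality case --- the same underlying fact packaged as a named inequality.
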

\begin{proof}
Let $E_1=f(w_1,\cdots, w_\mu)$ and $E_2=f(1,\cdots,1).$ It is easy to verify that $E_1=\frac{1}{\sum_{j=1}^\mu\frac{n_j}{V_j}}$ and $E_2 = \frac{\sum_{j=1}^\mu n_j V_j}{\left(\sum_{j=1}^{\mu} n_j\right)^2}.$ Then $E_1\leq E_2$ if and only if
\begin{equation}\label{eq:WTSminwbetter}
\sum_{1\leq i<j\leq \mu} 2n_in_j \leq \sum_{1\leq i<j\leq \mu} n_i n_j \left(\frac{V_i}{V_j}+\frac{V_j}{V_i}\right).
\end{equation}
Note that for any $i=1,\cdots, \mu$ and $j=i+1,\cdots,\mu,$ since $(V_i-V_j)^2\geq 0$ and it is $0$ if and only if $V_i=V_j,$ we have $\frac{V_i}{V_j} + \frac{V_j}{V_i} \geq 2$ for any positive real numbers $V_i$ and $V_j$ and equality is achieved if and only if $V_i=V_j.$ This concludes the proof.
\end{proof}

Lemma~\ref{lem:minwbetter} shows that unless we have the same variance for all the privacy levels, using weighted average with weights as defined in Eq.~\eqref{eq:goodweight} provides a statistically more accurate estimate compared to the unweighted average. Note that since we assume the estimator is unbiased, having smaller mean squared error implies that we also have an estimator with smaller variance, which again shows that the weighted average provides a better statistical accuracy.
\\
\textbf{Remark.} Note that in addition to the guarantee that weighted aggregation achieves better statistical accuracy compared to unweighted aggregation, it is also done as a post-processing calculation which consumes no extra privacy budget. We also note that such post-processing technique only requires an additional of $\mu$ real number multiplication operations in the multiplication of aggregated data for different privacy settings by its corresponding weights, which is asymptotically negligible, especially when $\mu$ is assumed to be small.

\section{Experiment} \label{expe}
\subsection{Experiment Setup}
\subsubsection{Dataset} We test the proposed methods on both real and synthetic datasets.
\begin{itemize}
    \item \textit{Gowalla.} Gowalla dataset is collected from a location-based social network. It includes millions of check-in information from Nov 2009 to Dec 2011. We extract the records with location in the range of $[30,45]\times[-100,-80]$ on the map and partition it into cells of $5 \times 5.$ Each user then merges his data records to one indicating his mostly frequently visited area within the cells. 
    \item \textit{Census-Income.} Census-Income dataset contains census data extracted from the 1994 and 1995 Current Population Surveys conducted by the U.S.Census Bureau. We extract the income attribute and group the data to subgroups, each corresponding to disjoint intervals of length $100.$  
    \item \textit{Synthetic.} We generate a random data set with $1000$ population, each user holding $1$ of $30$ possible items which is sampled uniformly at random.
\end{itemize}
\subsubsection{Metrics} We evaluate the accuracy of the estimation methods by a traditional metric, mean squared error (MSE), which is defined as $MSE = \frac{1}{N}\sum_{i=1}^N(\psi_i -\widehat{\psi_i})^2.$
\subsubsection{Environment}  
All algorithms are implemented in MATLAB and tested on a remote server with 62.87GB RAM. We run each algorithm 20 times and report the average result. 

\subsection{Performance Evaluation}
\subsubsection{Effect of Privacy Level to $\mathtt{DPDS}$}
Given the uniformity assumption $\beta,$ we randomly sample $1000$ data records from each processed data set under the assumption that each item occurs at least $1000\beta$ times. Along with the privacy budget $\epsilon$ and the number of different items $N,$ we determine the failure probability $\delta$ that satisfies \eqref{bound}.
To show the advantage of sampling, we compare $\mathtt{DPDS}$ against $\mathtt{DPDG}$ for various privacy levels $\epsilon\in\{0.1,0.2,\cdots, 1\}.$ 

As shown in Fig. \ref{fig1}, $\mathtt{DPDS}$ performs much better than the $\mathtt{DPDG}$ for the same privacy level over all three data sets. Furthermore, it achieves much higher accuracy when the privacy budget is small. For example, $\mathtt{DPDS}$ has MSE of around $0.01$ when $\epsilon = 0.1$ on Gowalla data set, which is an improvement of over $90\%$ compared to the Gaussian mechanism when $\delta = 10^{-7}$. It even performs better than Gaussian mechanism with much weaker privacy guarantee ($\delta = 0.5$) when $\epsilon<0.6$. Similar results can be found over Census-Income data set and Synthetic data set. Such observation may be explained from the fact that $\mathtt{DPDS}$ uses sampling for the perturbation, which comes with a much smaller error compared to that of Gaussian mechanism that injects Gaussian noise. Gaussian mechanism has a large variance in scenarios with small privacy budget and small population size.

\begin{figure*}[ht]
     \centering
\subfigure[Gowalla]{
\label{Fig-Gowalla}
\includegraphics[scale=0.3]{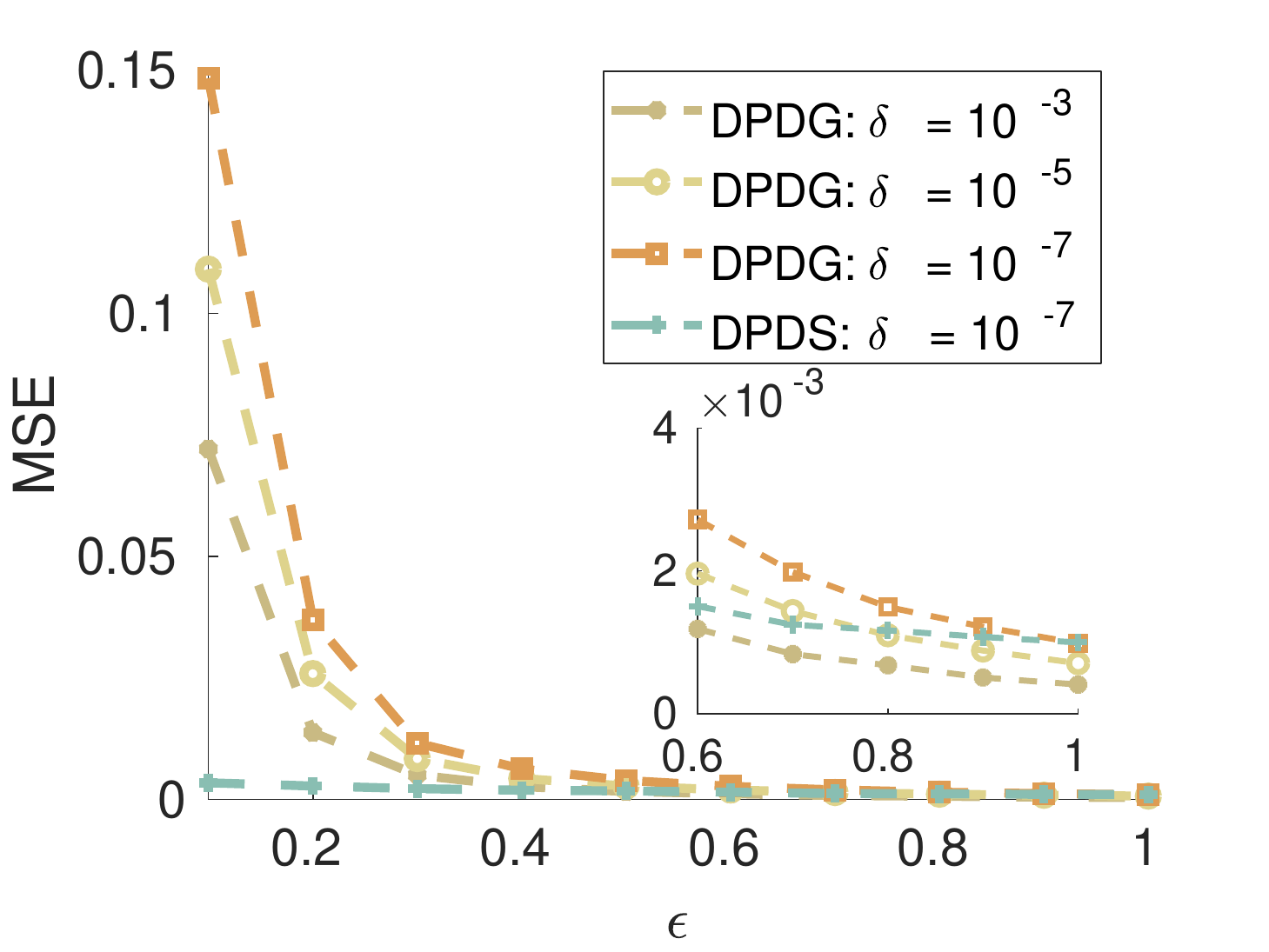}
}
\subfigure[Census-Income]{
\label{Fig-USIncome}
\includegraphics[scale=0.3]{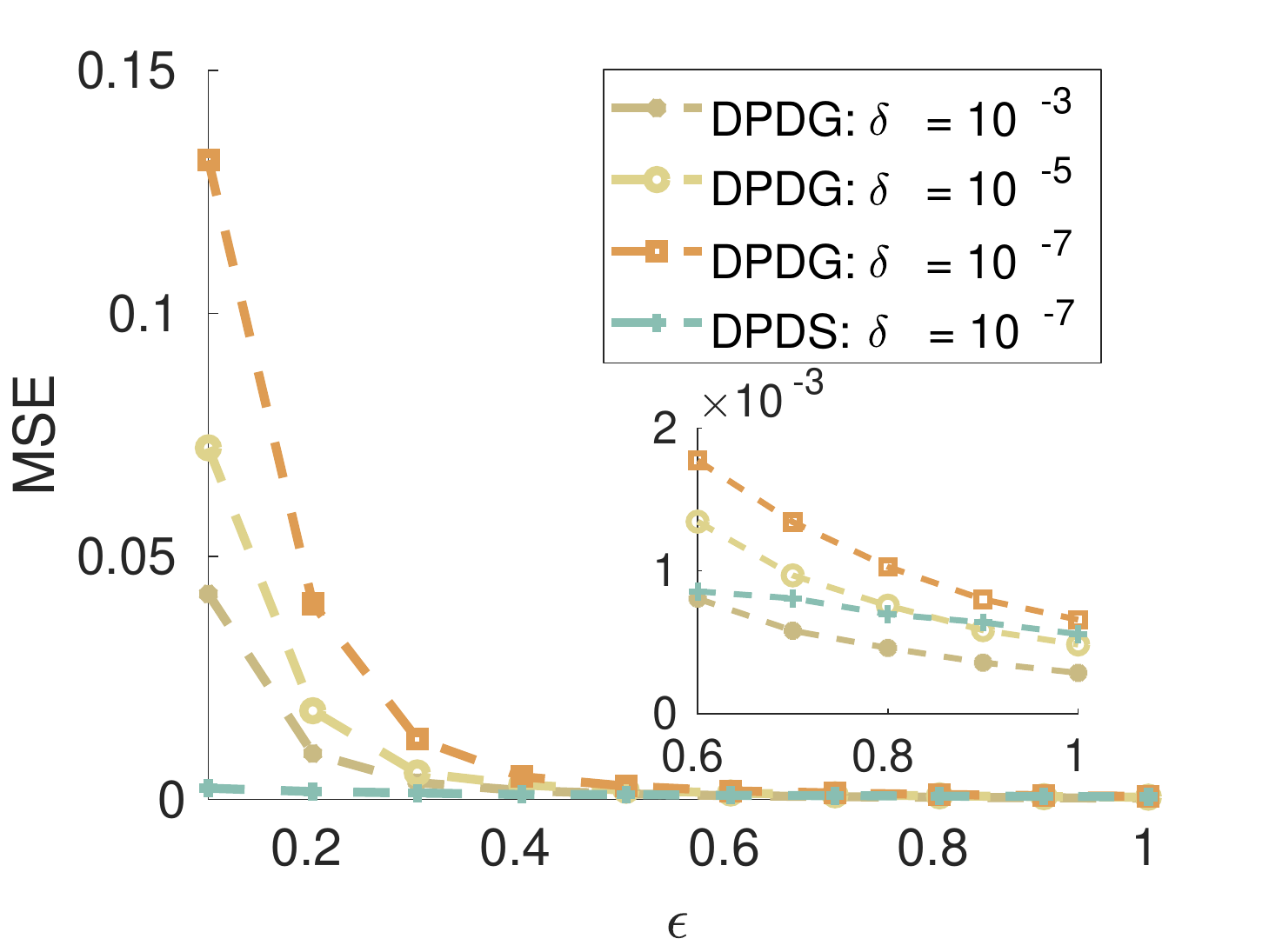}
}
\subfigure[Synthetic]{
\label{Fig-Synthetic}
\includegraphics[scale=0.3]{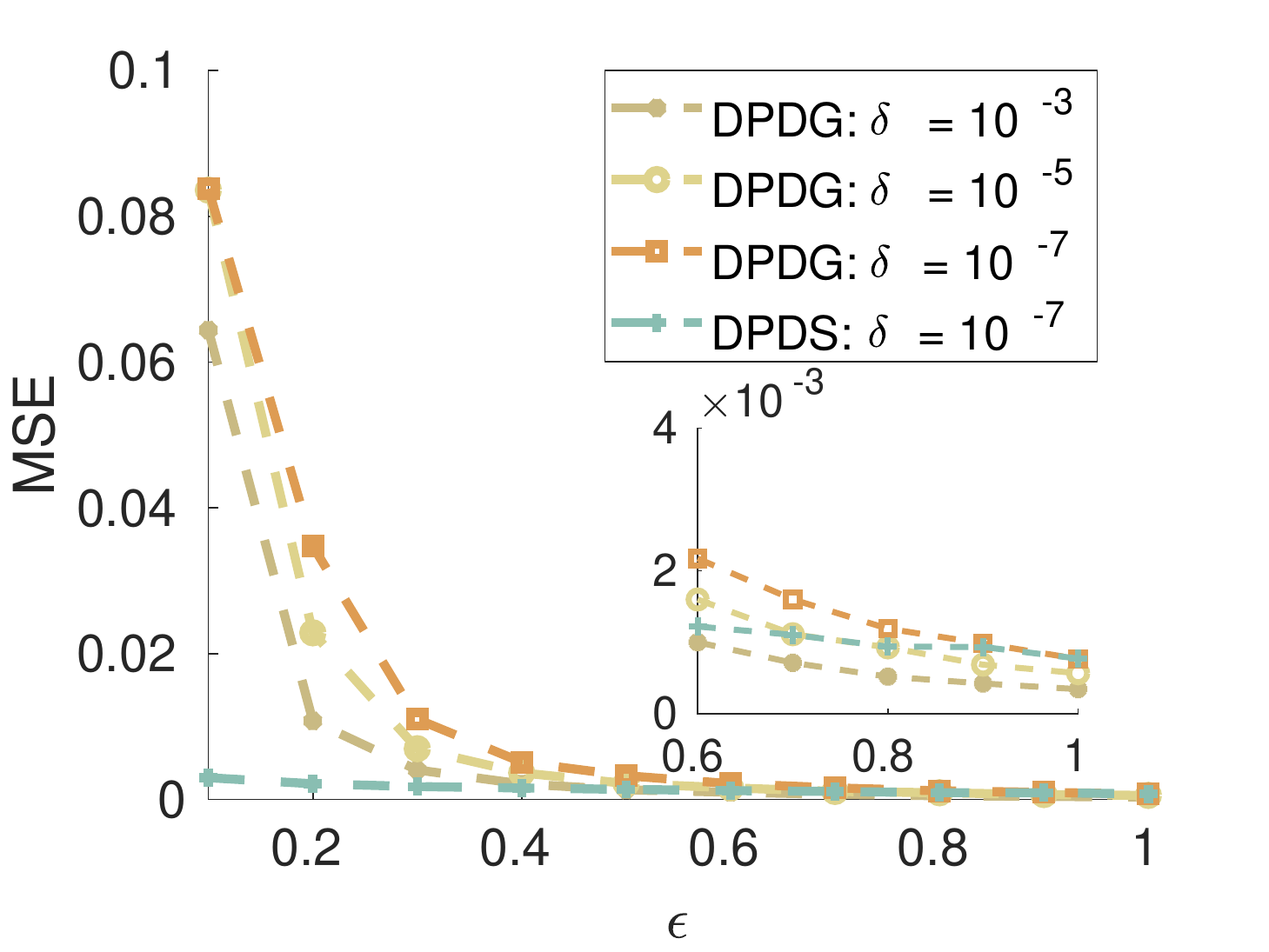}
}
\caption{Performance comparison of $\mathtt{DPDS}$ and $\mathtt{DPDG}$ for various $\epsilon$ and $\delta$}
\label{fig1}
\end{figure*}

\subsubsection{Effect of Population Size on $\mathtt{DPDS}$}
Next, we consider the effect that the number of participants gives to the performance of the mechanism $\mathtt{DPDS}.$ To simulate this, we sample $n$ data each from the three data sets where $n\in\{1000,2000,\cdots, 5000\}.$ We also vary the privacy budget $\epsilon\in\{0.2,0.4,\cdots, 1\}.$

As can be observed in Fig. \ref{fig2}, there is an obvious accuracy improvement when the size of participants increases over all data sets. 
For example, when Gowalla data set is considered, setting $\epsilon = 0.2, \mathtt{DPDS}$ has a mean squared error of around $2.4\times 10^{-3}$ when there are $1000$ participants while it achieves an MSE of around $0.9\times 10^{-3}$ for $5000$ participants. We find a similar trend over both Census-Income data set and Synthetic data set. 
The experiment results are consistent with the accuracy analysis that the statistical error is inversely proportional to the population size. 

\begin{figure*}[ht]
     \centering
\subfigure[Gowalla]{
\label{Fig2-Gowalla}
\includegraphics[scale=0.3]{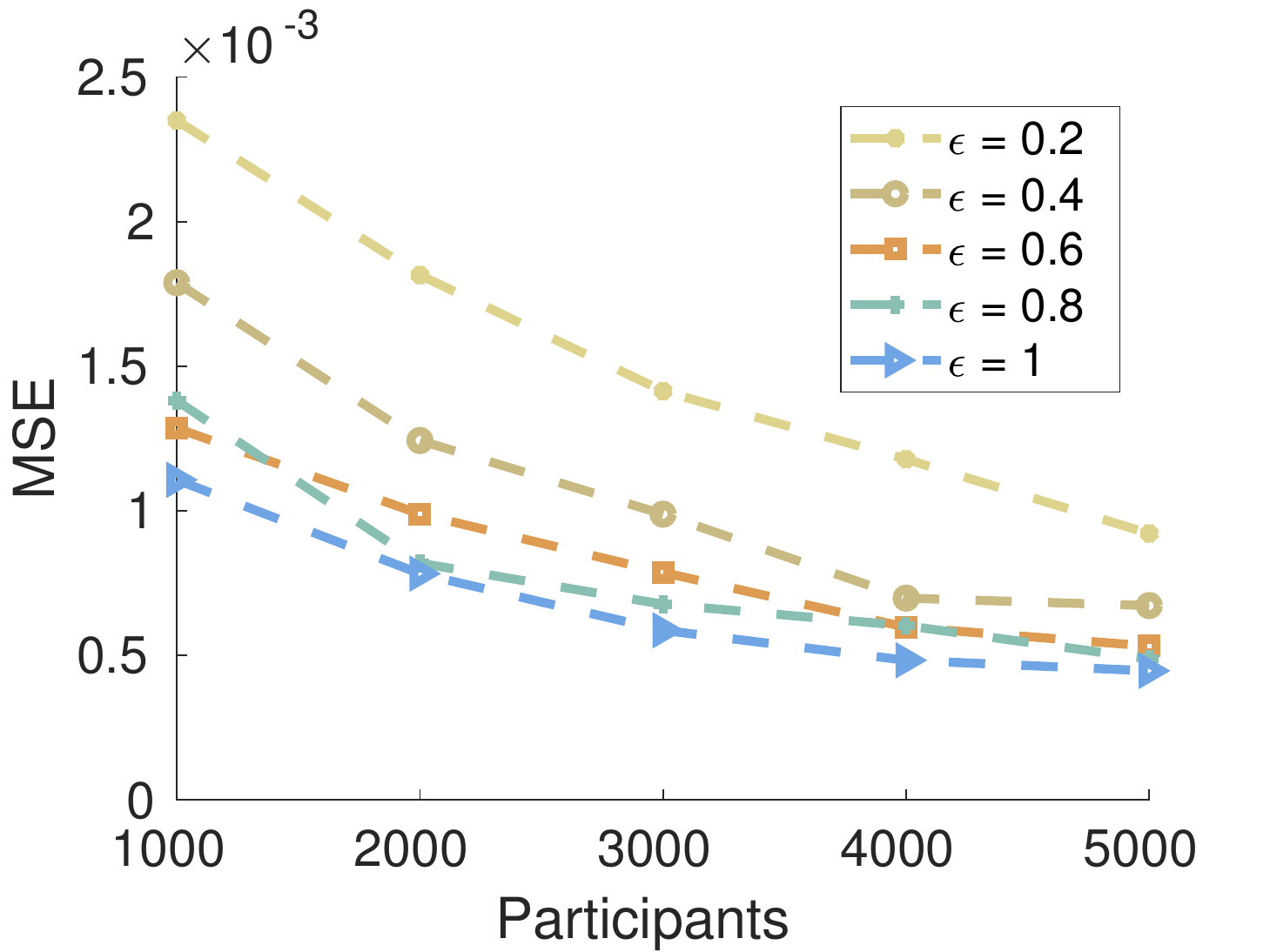}
}
\subfigure[Census-Income]{
\label{Fig2-USIncome}
\includegraphics[scale=0.3]{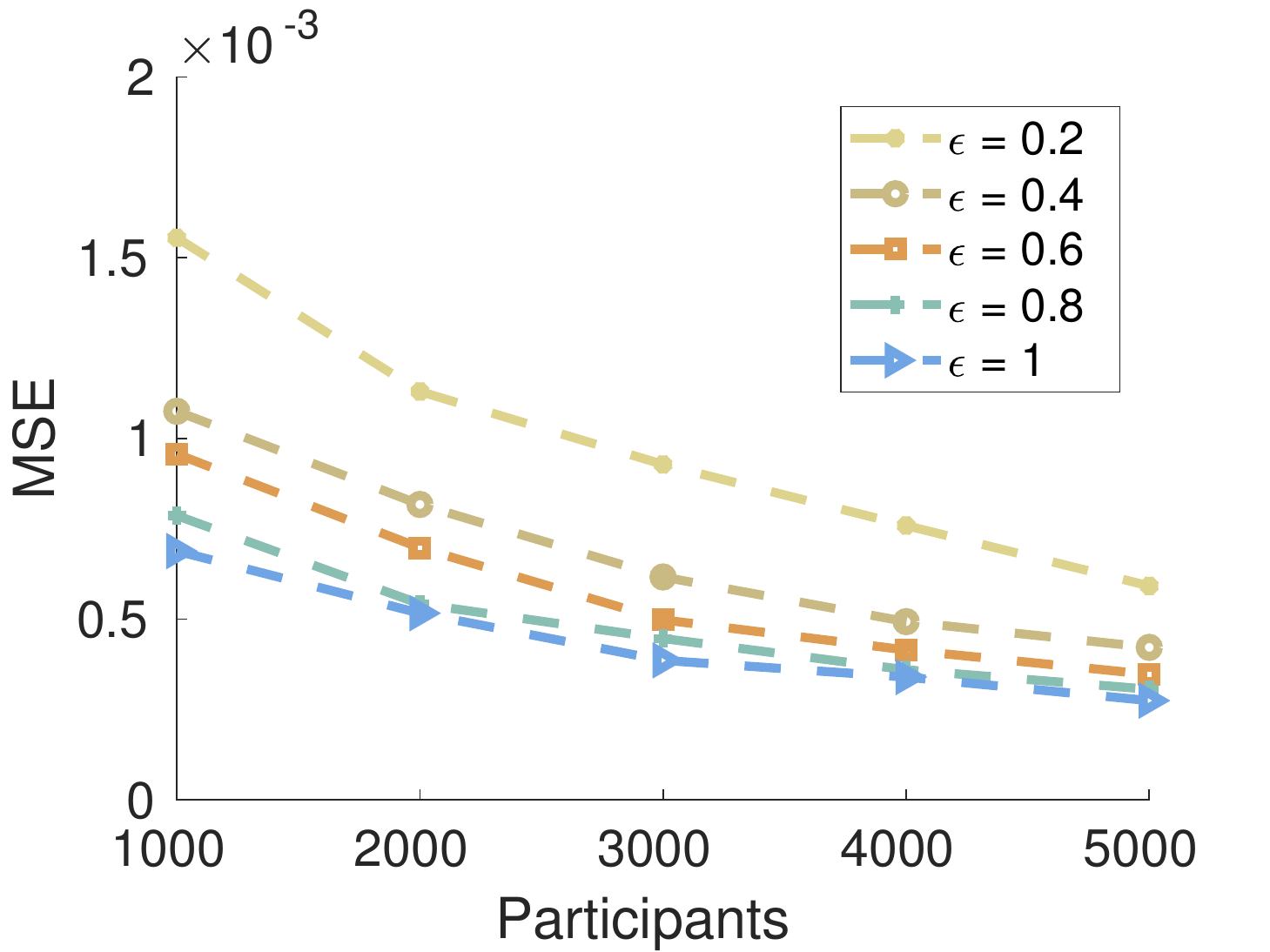}
}
\subfigure[Synthetic]{
\label{Fig2-Synthetic}
\includegraphics[scale=0.3]{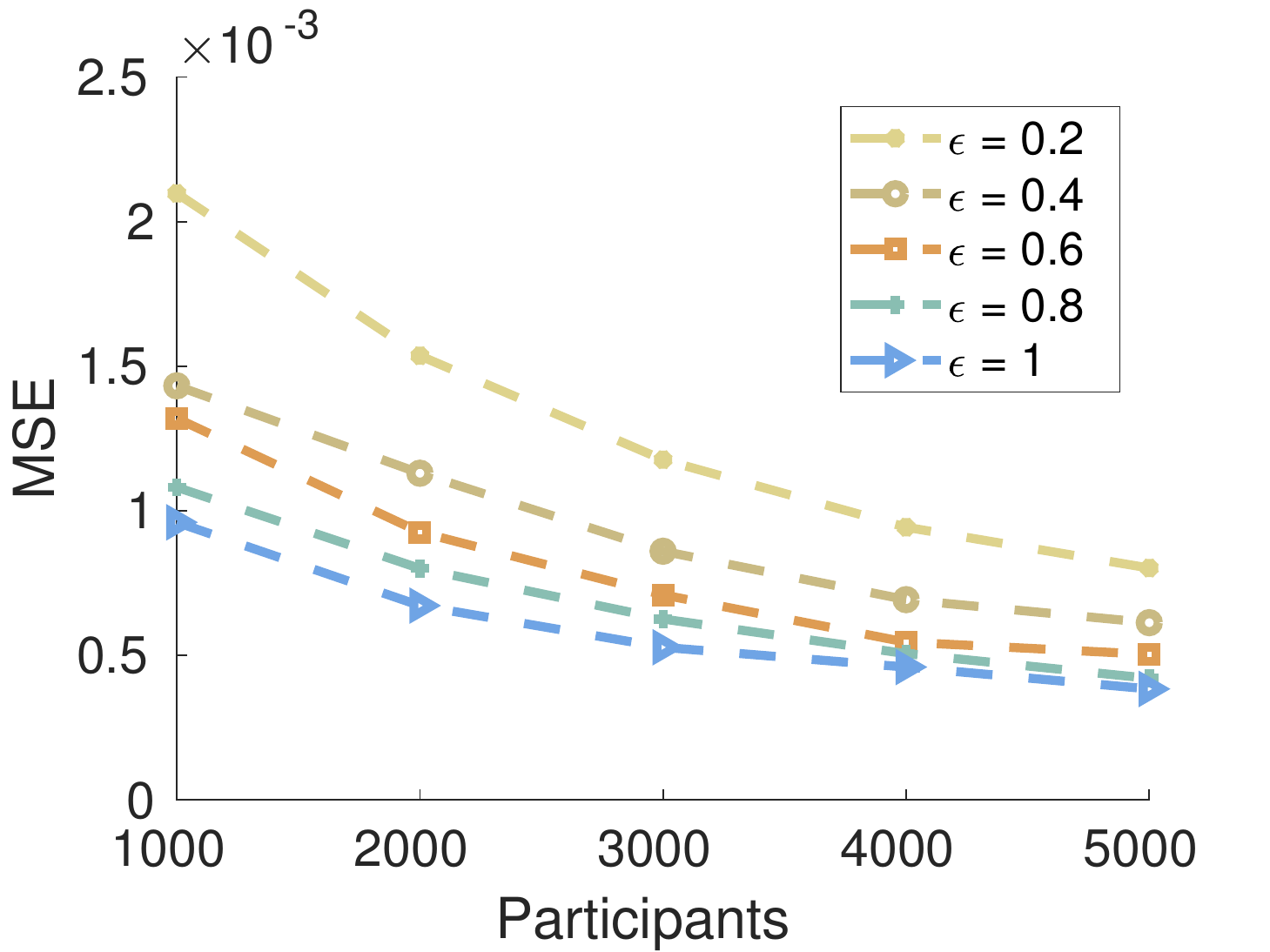}
}
\caption{Performance comparison of $\mathtt{DPDS}$ under different population sizes}
\label{fig2}
\end{figure*}

\subsubsection{Performance of Two-Stage Sampling}
To compare the effect of different ways of sampling on top of the first stage sampling, we consider $\alpha = 0.4$ and compare the uniform sampling $(US)$ to adaptive sampling $(AS)$ by varying the sampling probability over three data sets. 

As shown in Fig. \ref{fig3}, the adaptive sampling achieves a smaller MSE compared to uniform sampling under all settings. 
More specifically, having larger $\epsilon^\ast$ provides the adaptive sampling with a more statistically accurate estimate. This is because for the same sampling probability, the adaptive sampling lets an actively participating user to have a higher chance to report his true value. The larger $\epsilon^*,$ the higher the probability that the true value included in the report values. 

\begin{figure*}[ht]
     \centering
\subfigure[Gowalla]{
\label{Fig3-Gowalla}
\includegraphics[scale=0.3]{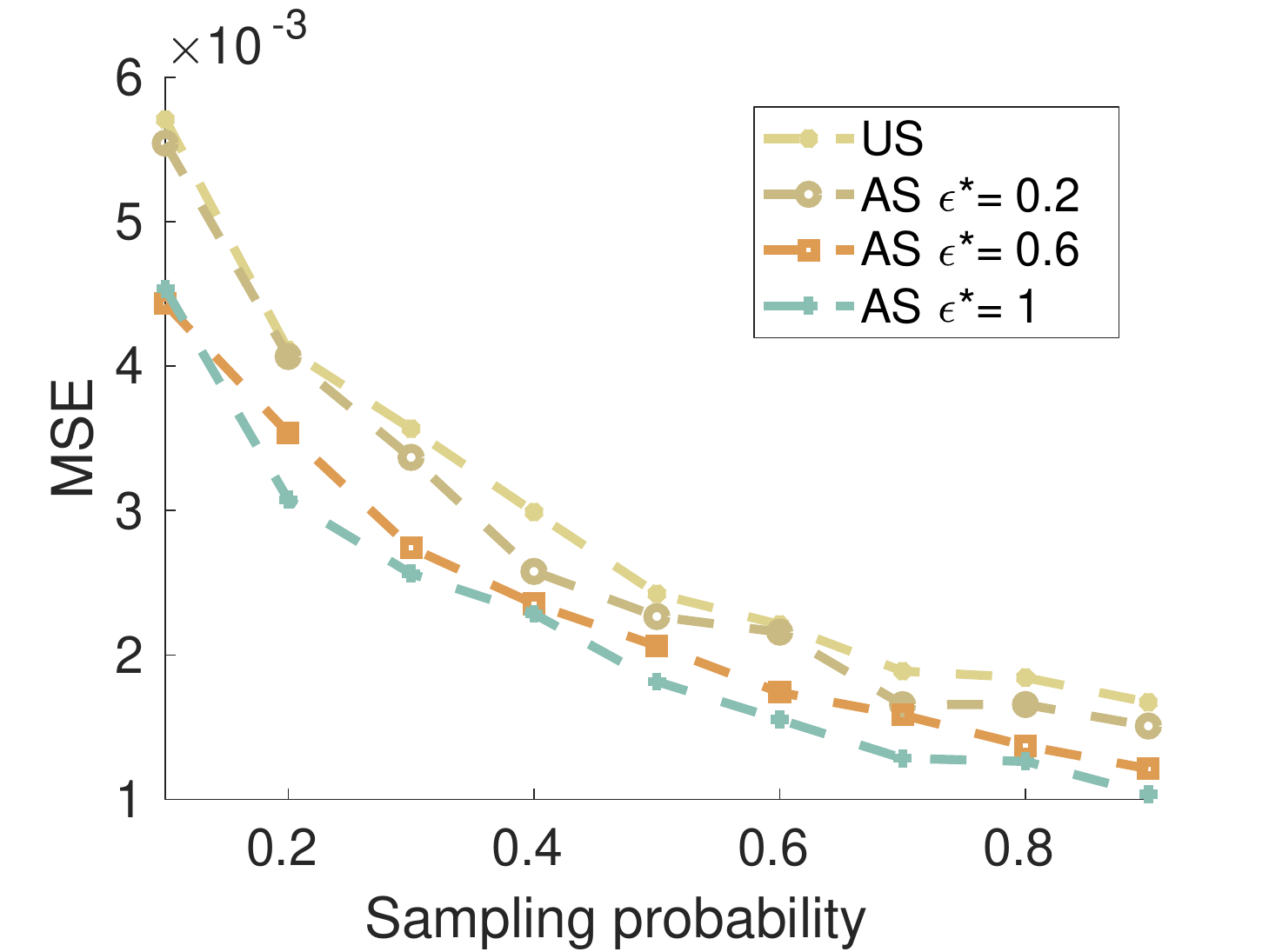}
}
\subfigure[Census-Income]{
\label{Fig3-USIncome}
\includegraphics[scale=0.3]{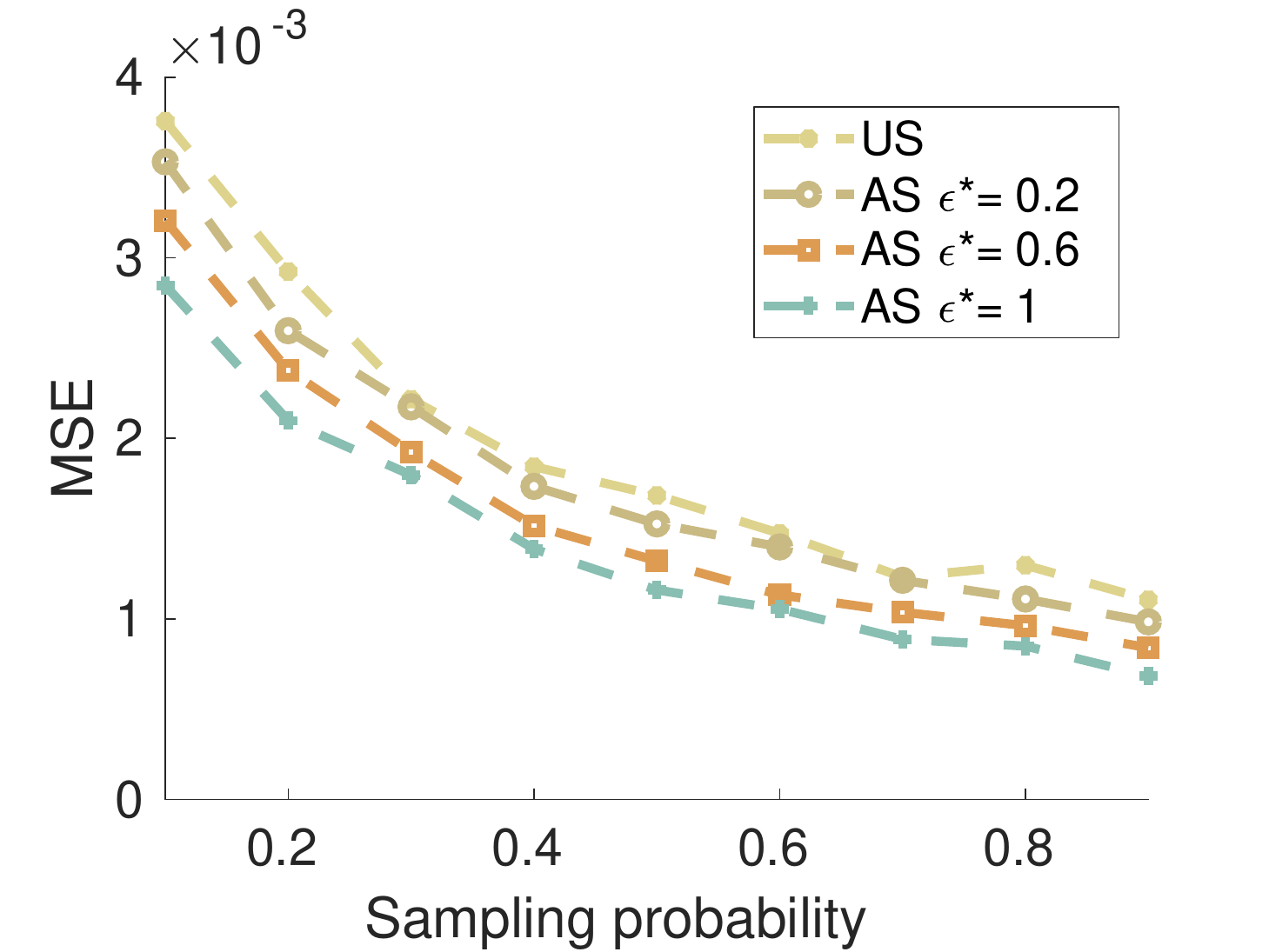}
}
\subfigure[Synthetic]{
\label{Fig3-Synthetic}
\includegraphics[scale=0.3]{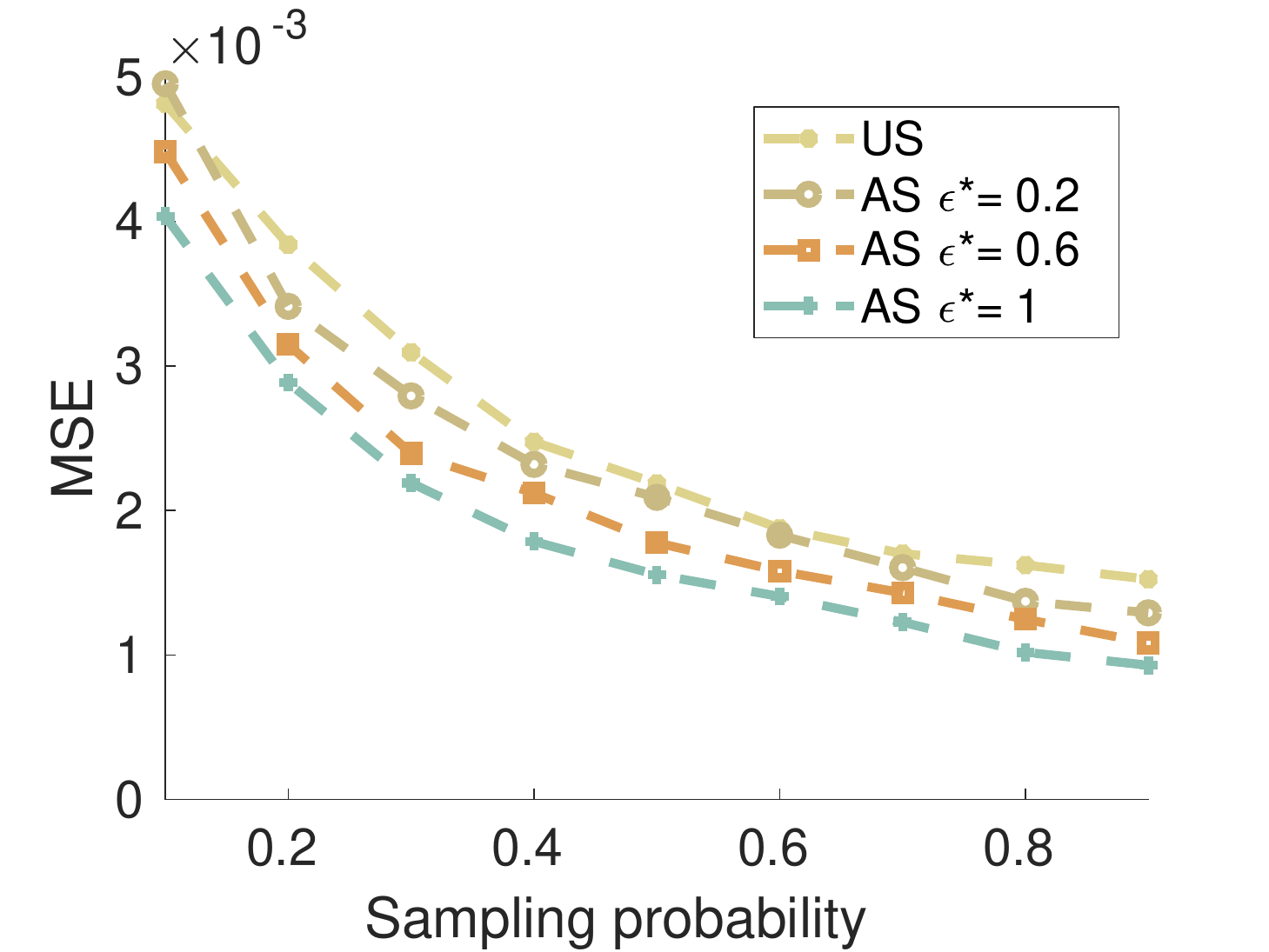}
}
\caption{Performance of two-stage sampling}
\label{fig3}
\end{figure*}

\subsubsection{Performance of Weighted Aggregation}

\textit{Weighted aggregation for frequency estimation by sampling.} To show the performance of the weighted aggregation, we partition $1000$ users into four groups with the same size where each group is assigned different privacy budgets. The settings of privacy budget assignment is shown in the first column of Table \ref{weightedfe}. 
Settings $s_2$ and $s_3$ are used to simulate the case where the groups size are not the same.

We compare the weighted aggregation method with both unweighted method (denoted by UWA) and the method that is using the smallest privacy budget (denoted by CPA). Fig. \ref{fig4} shows that the weighted method has the smallest mean squared error compared to the other two methods over all settings. That is because the weighted aggregation reduces the overall statistical variance. 
To further examine the effectiveness of the proposed weighted aggregation framework, we test it on two types of data distributions, uniform distribution and standard normal distribution, both of which having $1000$ participants and $30$ different items. Table \ref{weightedfe} shows that the assigned weights for the four groups are the same for both data distributions with fixed privacy setting. We also observe that when considered on data with normal distribution, the mechanism has a relatively larger error compared to the implementation on data generated following the uniform distribution.

\begin{figure*}[ht]
     \centering
\subfigure[Gowalla]{
\label{Fig4-Gowalla}
\includegraphics[scale=0.3]{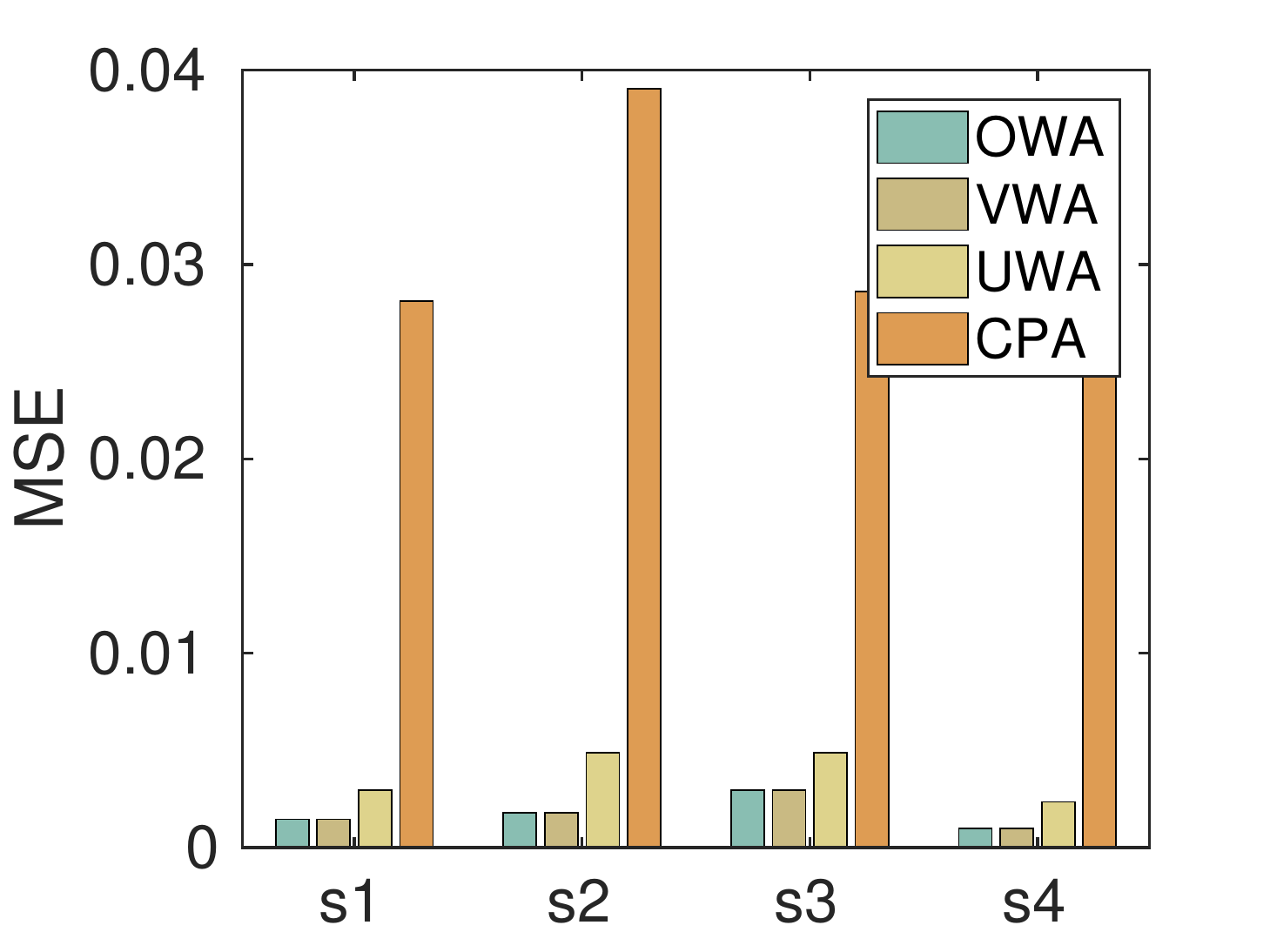}
}
\subfigure[Census-Income]{
\label{Fig4-Census-Income}
\includegraphics[scale=0.3]{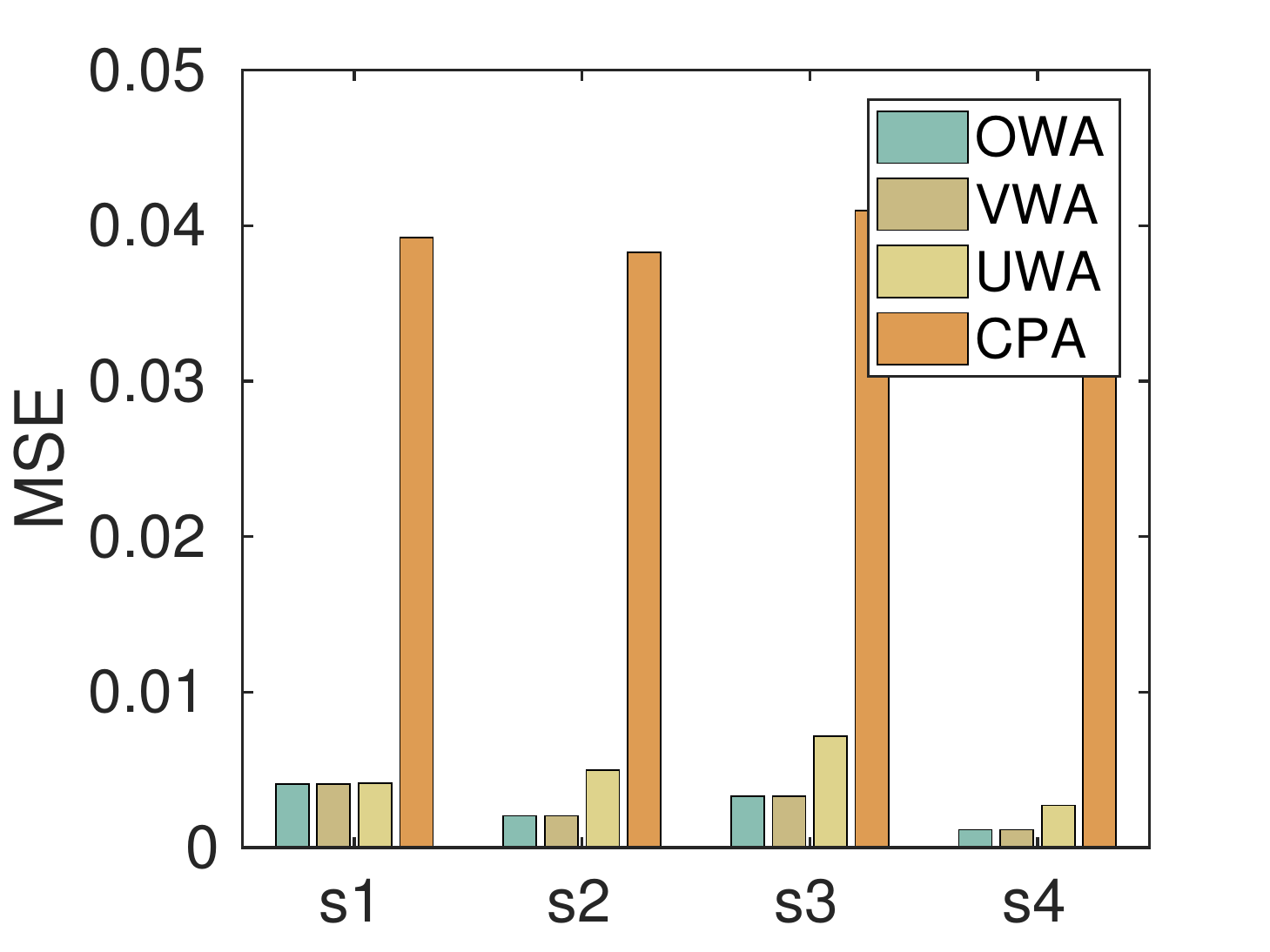}
}
\subfigure[Synthetic]{
\label{Fig4-Synthetic}
\includegraphics[scale=0.3]{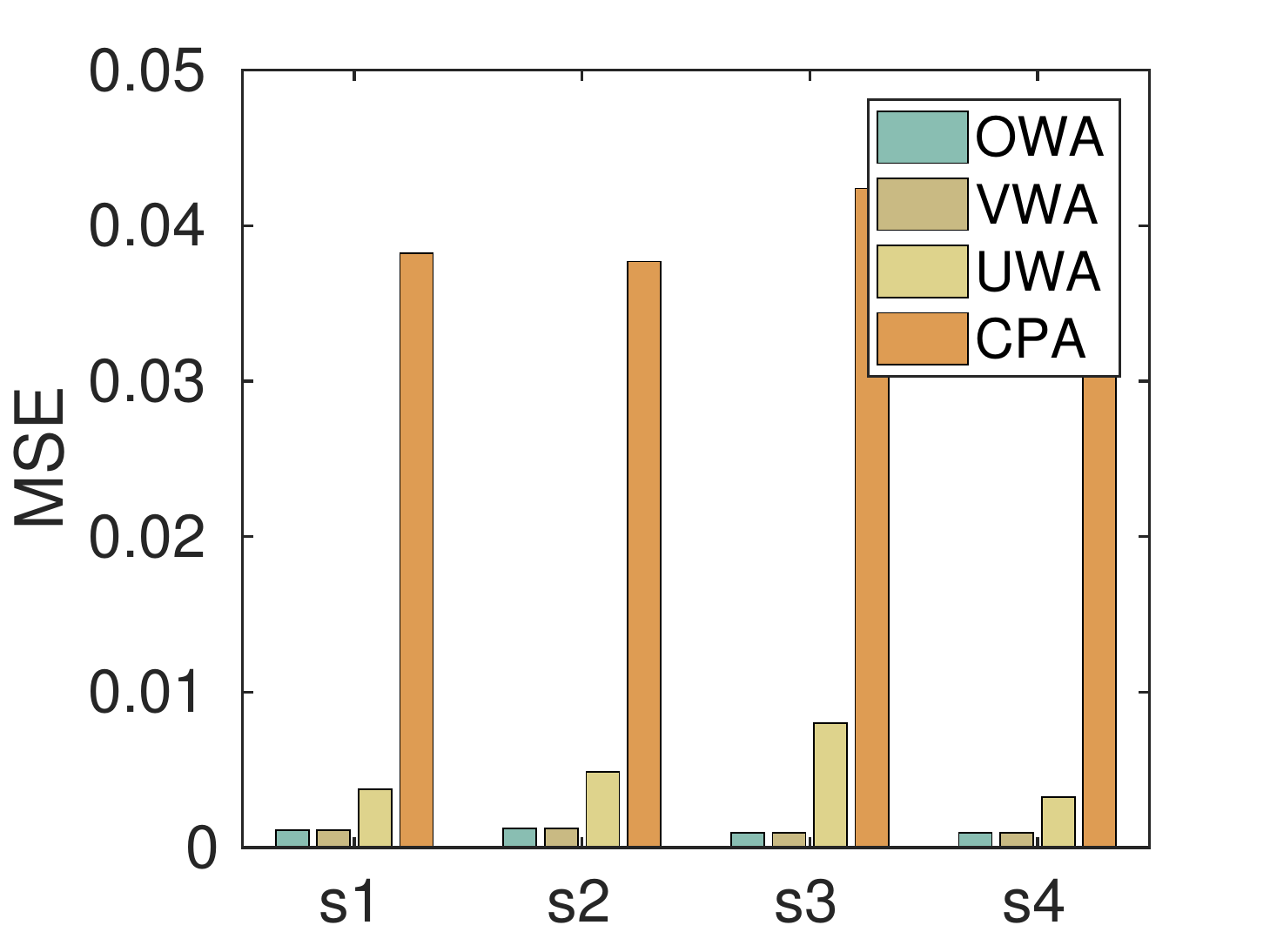}
}
\caption{Performance of weighted aggregation for frequency estimation}
\label{fig4}
\end{figure*}

\begin{table*}[ht]

\centering
\begin{threeparttable}[b]
    \begin{tabular}{|c|c|ccc|}
    \hline
    Privacy settings & Distribution & Method & Weights & MSE \\
    \hline
   \multirow{4}{*}{$s_1~ [0.1,0.4,0.7,1]$}& \multirow{2}{*}{Uniform} & OWA/VWA & $[0.0316,0.1477,0.3045,0.5162]$ & $1.124\times 10^{-3}$\\ 
   \cline{3-5}& & UWA & $[----]$ & $3.757\times 10^{-3}$ \\ 
   \cline{2-5} & \multirow{2}{*}{Normal} & OWA/VWA & $[0.0316, 0.1477, 0.3045, 0.5162]$ & $4.96\times 10^{-3}$\\ 
      \cline{3-5} & & UWA & $[----]$ & $5.72\times 10^{-3}$ \\ 
   \hline
   \multirow{4}{*}{$s_2~ [0.1, 0.1, 0.8, 1]$} & \multirow{2}{*}{Uniform} &OWA/VWA & $[0.0333, 0.0333, 0.3885, 0.5448]$ & $1.253\times 10^{-3}$ \\
   \cline{3-5}& & UWA & $[----]$ &$4.885\times 10^{-3}$ \\ 
      \cline{2-5} & \multirow{2}{*}{Normal} & OWA/VWA & $[0.0333, 0.0333, 0.3885, 0.5448]$ & $6.26\times 10^{-3}$\\ 
      \cline{3-5} & & UWA & $[----]$ & $6.51\times 10^{-3}$ \\ 
    \hline
 \multirow{4}{*}{$s_3~ [0.1, 0.1, 0.1, 1]$} & \multirow{2}{*}{Uniform}&OWA/VWA & $[0.0517, 0.0517, 0.0517, 0.8449]$ & $9.52\times 10^{-4} $\\ 
   \cline{3-5}& & UWA & $[----]$ & $7.999\times 10^{-3} $\\ 
   \cline{2-5} & \multirow{2}{*}{Normal} & OWA/VWA & $[0.0517, 0.0517, 0.0517, 0.8449]$ & $8.66\times 10^{-3}$\\ 
      \cline{3-5} & & UWA & $[----]$ & $1.4\times 10^{-2}$ \\ 
\hline
 \multirow{4}{*}{$s_4 ~[0.1, 0.8, 0.7, 1]$} & \multirow{2}{*}{Uniform}&OWA/VWA & $[0.0259,	0.3017, 0.2495, 0.4229]$ & $9.61\times 10^{-4} $\\ 
   \cline{3-5}& & UWA & $[----]$ & $3.233\times 10^{-3} $\\ 
   \cline{2-5} & \multirow{2}{*}{Normal} & OWA/VWA & $[0.0259,	0.3017, 0.2495, 0.4229]$ & $4.02\times 10^{-3}$\\ 
      \cline{3-5} & & UWA & $[----]$ & $4.17\times 10^{-3}$ \\ 
\hline
    \end{tabular}
  \begin{tablenotes}
  \item[*] OWA: weighted aggregation through optimization function VWA: weighted aggregation through proportion of variance UWA: unweighted aggregation CPA: consistent privacy aggregation with $\epsilon=0.1$. 
\end{tablenotes}  
\end{threeparttable}
\caption{Weights assignment for groups with different privacy settings}
\label{weightedfe}
\end{table*}

\textit{Weighted aggregation for other mechanisms.}
As discussed before, the proposed weighted aggregation does not only apply to the proposed sampling method. It also works for various perturbation methods and statistical analysis. To show the effectiveness of weighted aggregation over other mechanisms, we provide the experimental result on the use of the weighted aggregation in mechanisms using other frequency estimation methods. More specifically, we provide the experimental result on its use to classical local frequency estimation methods GRR \cite{kairouz2016discrete} and OUE \cite{wang2017locally} as well as some other typical mean value estimation methods such as Duchi's method \cite{duchi2018minimax} and PM \cite{wang2019collecting}. We test all these methods on a synthetic data set with $10000$ participants following standard Gaussian  distribution. As shown in Fig. \ref{fig5}, the weighted aggregation has much improvement compared with directed aggregation for all the methods.

\begin{figure*}[ht]
     \centering
\subfigure[GRR]{
\label{Fig5-GRR}
\includegraphics[scale=0.25]{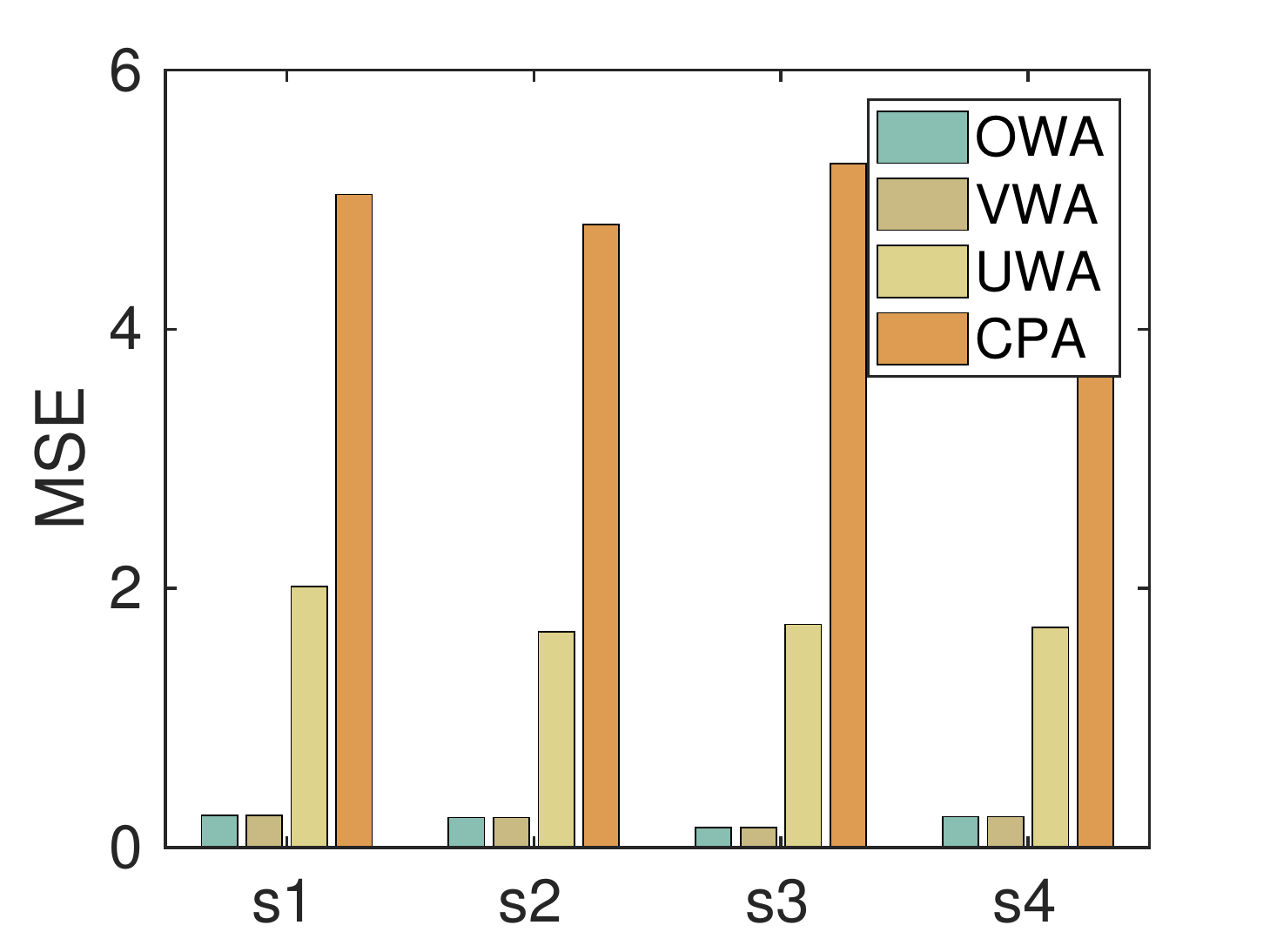}
}
\subfigure[OUE]{
\label{Fig5-OUE}
\includegraphics[scale=0.25]{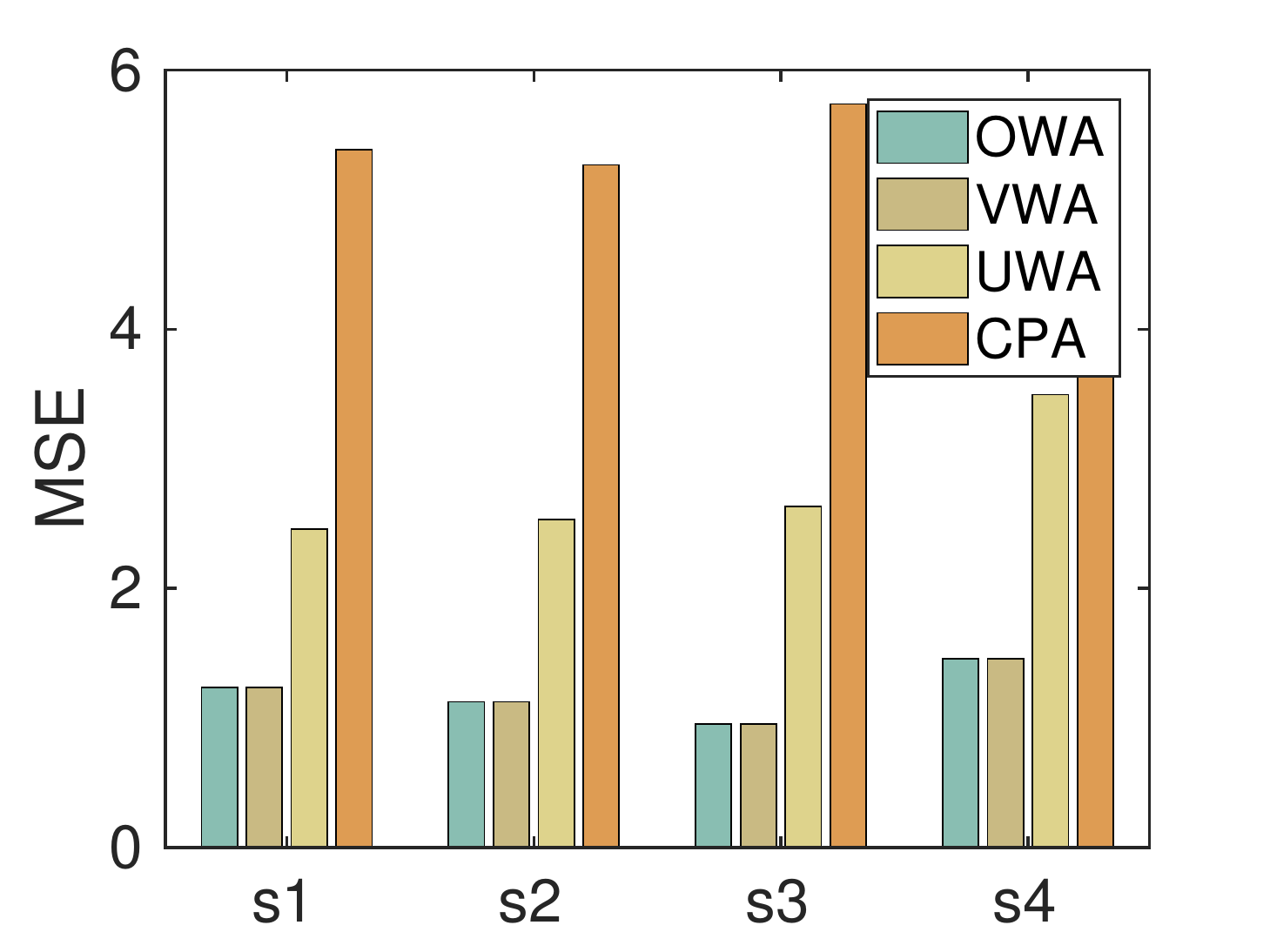}
}
\subfigure[Duchi]{
\label{Fig5-Duchi}
\includegraphics[scale=0.25]{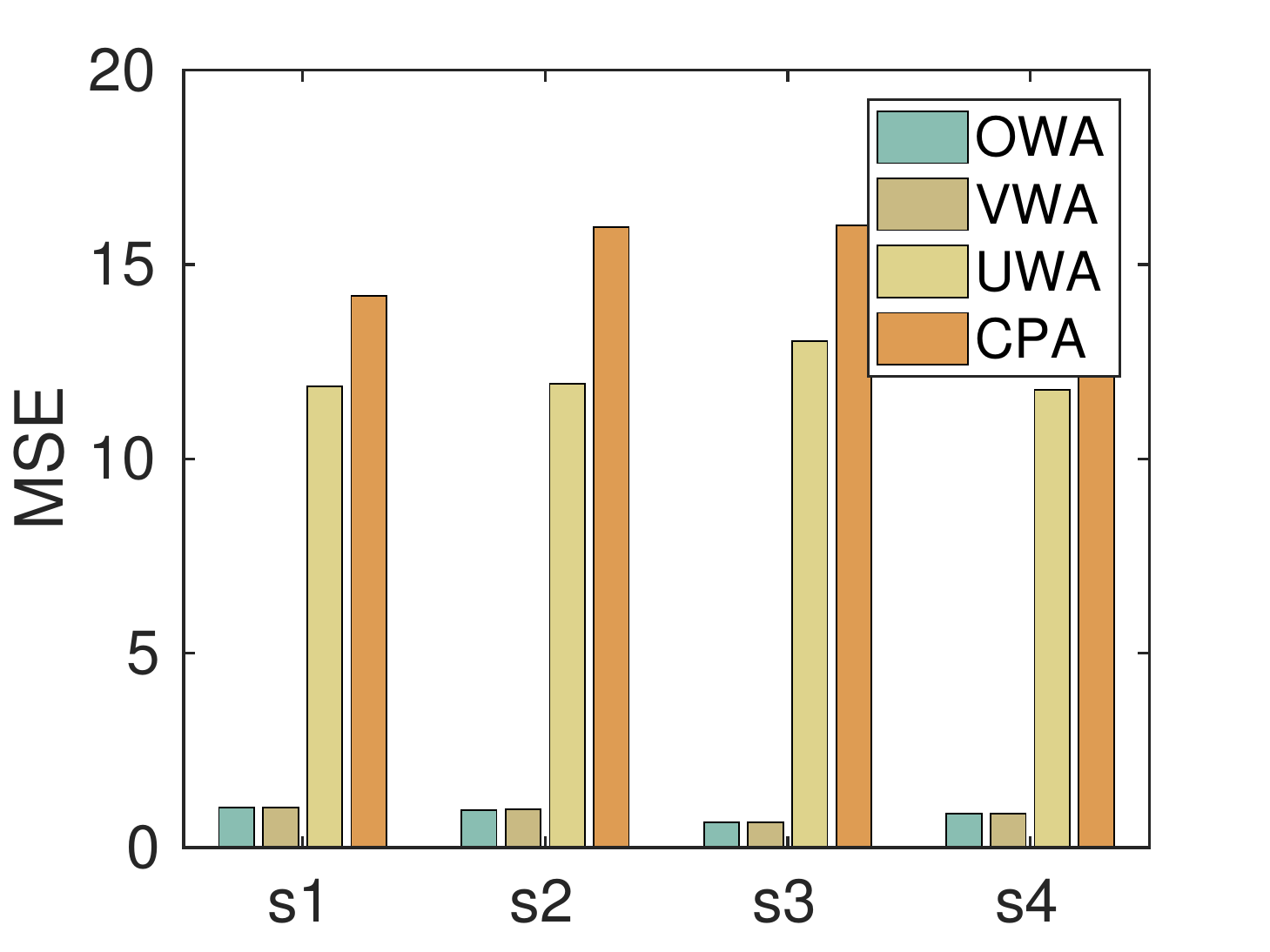}
}
\subfigure[PM]{
\label{Fig5-PM}
\includegraphics[scale=0.25]{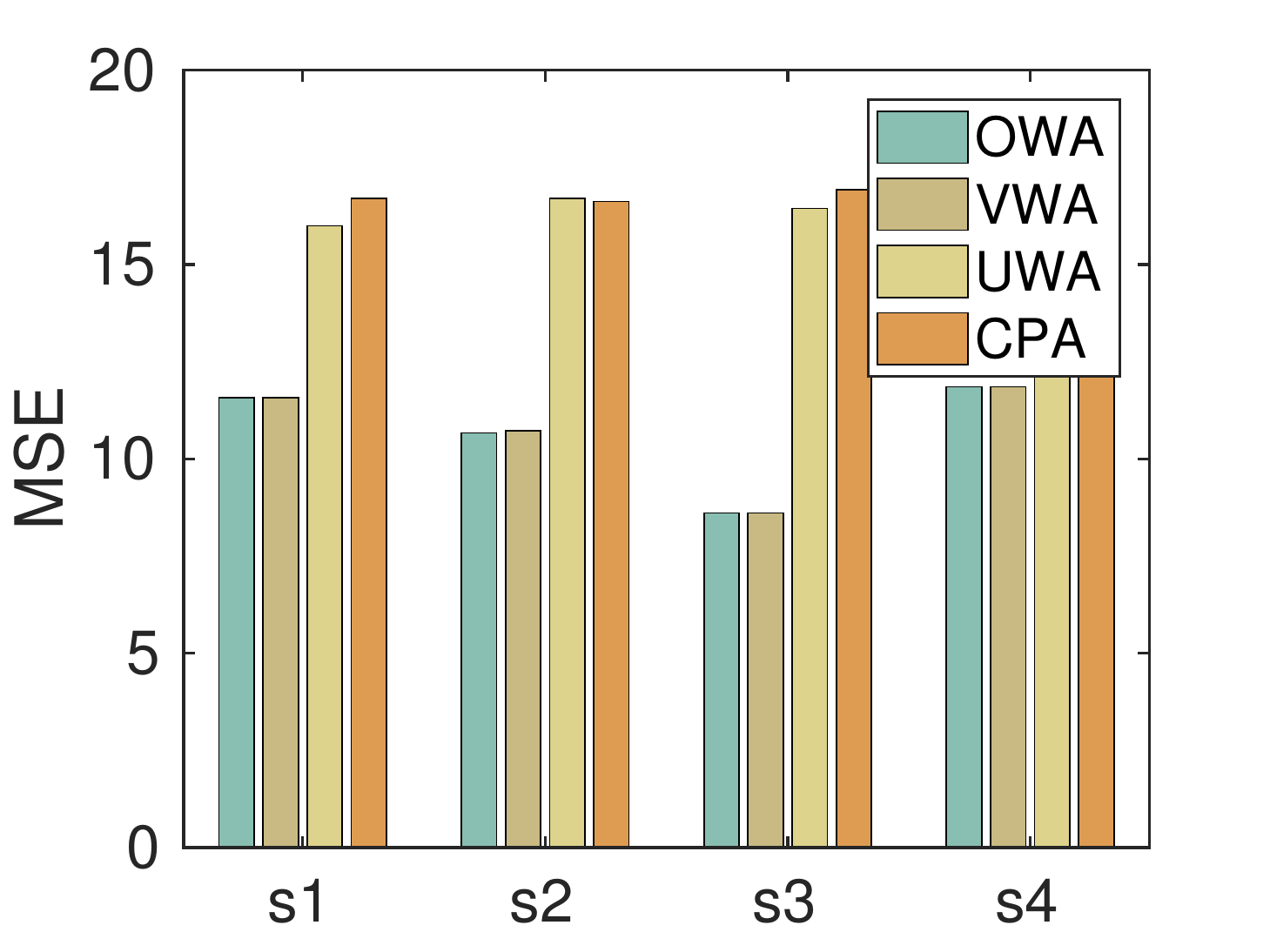}
}
\caption{Performance of weighted aggregation for other mechanisms}
\label{fig5}
\end{figure*}

\section{Related Work} \label{rela}

\noindent\textbf{Frequency estimation with local differential privacy.}
Frequency estimation mechanisms which provide local differential privacy against the server have been extensively studied in the literature~\cite{yang2020local}. One of the simplest methods to achieve this is the randomized response. 
Such technique was later generalized by Karionuz \textit{et al}.\cite{kairouz2016discrete} to be applicable for data with a higher dimensional attribute.
Later, Wang \textit{et al.} \cite{wang2017locally} proposed an optimized unary encoding (OUE) method, which encodes the true value following one hot encoding method, then performs randomized response to each bit of the vector. To deal with a higher dimensional attribute, Erlingsson \textit{et al.} \cite{erlingsson2014rappor} and Wang \textit{et al.} \cite{wang2017locally} proposed a hash-based method, which maps the user data into a much smaller domain before performing the randomized response process to reduce the statistical variance. Bassily and Smith \cite{bassily2015local} proposed a transformation-based method, which transforms the user's data from $d$ bits to only $1$ bit of data, which not only results in an estimation with smaller statistical variance, but also reduces the communication cost significantly. Besides, Wang \textit{et al.} \cite{wang2016mutual} proposed that instead of reporting the real items, users can randomly choose some of the items to report. Such subset selection method has a good performance in the intermediate privacy region compared to other methods. However, as we mentioned, all such local differential privacy methods need millions of participants to ensure the statistical accuracy due to the large variance introduced by the perturbation method.

\noindent\textbf{Combination of MPC and differential privacy.}
Multi-party computation provides a solution to securely compute functions on users' private data without disclosing them. However, MPC does not protect the privacy of any information that may be inferred from the output.
To protect the output privacy, it is natural to incorporate the differential privacy technique into the MPC calculation. 
In general, the combination of MPC and DP techniques that are considered in the literature utilize Laplace or Gaussian mechanism to provide the differential privacy guarantee. For example, Pettai and Laud \cite{pettai2015combining} studied and analyzed the overhead of adding Laplace noise to the secure multiparty computation. Bindschaedler et al. \cite{bindschaedler2017achieving} developed secure aggregation protocols to add Laplace and Gaussian noise to ensure differential privacy for the participants in a star network. 
The works in~\cite{guo2018practical,sun2020ldp} focus on the collaborative model training where Gaussian mechanism is incorporated to a training process to produce aggregated parameters that are differentially private. Furthermore, secure aggregation is introduced to reduce the required amount of the added noise. 
Besides, Hu et al. \cite{hu2021make} 
developed a secure computation protocol based on the Flajolet-Martin sketches for solving the Private Distributed Cardinality Estimation problem. In their work, they utilize the uncertainty introduced by the intrinsic estimation variance of the FM sketch to produce a differentially private output. 

\noindent\textbf{Sampling privacy.}
Sampling methods provide some measure of privacy-protection and are shown to perform better when the number of participants is small. It is usually used as an amplification method to strengthen the privacy bound in the distributed setting \cite{truex2020ldp, balle2020privacy}. Besides, Joy and Gerla \textit{et al.} \cite{joy2017differential} provided a sampling privacy mechanism under distributed environment. However, instead of discussing pure sampling mechanism, they let some of the data owners occasionally provide reports that are opposite to their supposed response. Husain \textit{et al.} \cite{husain2020local} assume that each user has multiple records and apply sampling process to each local data set. To our best knowledge, this paper is the first one to discuss the performance of pure sampling mechanism under distributed environment for frequency estimation. 

\noindent\textbf{Heterogeneous privacy setting.} Privacy-preserving schemes providing multiple levels of privacy setting are usually said to be in a personalized differential privacy setting, which are designed with the ability to provide personalized privacy to cater for users with different privacy requirements. 
The majority of the work \cite{yang2017personalized,wang2018personalized,zhang2019probabilistic,niu2020utility,cai2021profit} in this setting is focused on centralized setting and target on different applications, such as 
matrix factorization and crowdsourcing task assignment. On the other hand, distributed systems under personalized differential privacy setting have not been as extensively studied. Akter \textit{et al.} \cite{akter2017computing}. proposed a numerical perturbation method to estimate the data average under personalized differential privacy setting following the idea by Duchi \textit{et al.} \cite{duchi2018minimax}. However, in their schemes, the estimate from different privacy settings are directly aggregated together. Ye \textit{et al.} \cite{ye2019multiple} proposed a similar weighted aggregation for frequency estimation. However, the proposed weights only depend on the group size, which is not an optimal solution.

\section{Conclusion} \label{conc}

In this paper, we studied the frequency estimation problem and proposed a solution to provide a fair opportunity for companies with different scales to conduct users' data analysis. 
More specifically, our method is based on sampling. 
We first provided a theoretical bound for the sampling privacy and utility of the proposed method, then we extended the mechanism to the decentralized setting using secret sharing, which enables accurate analysis to be done without accessing user's original data. To make the mechanism more feasible for MPC, we proposed a two-stage sampling method, which further reduces the communication and computation load for users. In addition, we considered the scenario that users have different levels of privacy concern. Users are partitioned to several disjoint groups that operate independently to provide the frequency estimates for the respective groups. We proposed a weighted aggregation framework to obtain a more accurate estimate of the overall frequency from the estimates of different privacy groups.    

\bibliographystyle{IEEEtran}
\bibliography{IEEEabrv,references}

\begin{thebibliography}{10}
\providecommand{\url}[1]{#1}
\csname url@samestyle\endcsname
\providecommand{\newblock}{\relax}
\providecommand{\bibinfo}[2]{#2}
\providecommand{\BIBentrySTDinterwordspacing}{\spaceskip=0pt\relax}
\providecommand{\BIBentryALTinterwordstretchfactor}{4}
\providecommand{\BIBentryALTinterwordspacing}{\spaceskip=\fontdimen2\font plus
\BIBentryALTinterwordstretchfactor\fontdimen3\font minus
  \fontdimen4\font\relax}
\providecommand{\BIBforeignlanguage}[2]{{%
\expandafter\ifx\csname l@#1\endcsname\relax
\typeout{** WARNING: IEEEtran.bst: No hyphenation pattern has been}%
\typeout{** loaded for the language `#1'. Using the pattern for}%
\typeout{** the default language instead.}%
\else
\language=\csname l@#1\endcsname
\fi
#2}}
\providecommand{\BIBdecl}{\relax}
\BIBdecl

\bibitem{cormode2019answering}
G.~Cormode, T.~Kulkarni, and D.~Srivastava, ``Answering range queries under
  local differential privacy,'' \emph{Proceedings of the VLDB Endowment},
  vol.~12, no.~10, pp. 1126--1138, 2019.

\bibitem{erlingsson2014rappor}
{\'U}.~Erlingsson, V.~Pihur, and A.~Korolova, ``Rappor: Randomized aggregatable
  privacy-preserving ordinal response,'' in \emph{Proceedings of the 2014 ACM
  SIGSAC conference on computer and communications security}, 2014, pp.
  1054--1067.

\bibitem{DPTA}
A.~Differential Privacy~Team, ``Learning with privacy at scale,'' Available at:
  \url{https://docs-assets.developer.apple.com/ml-research/papers/learning-with-privacy-at-scale.pdf},
  accessed: \today.

\bibitem{DLSC15}
F.~Dzogang, T.~Lansdall-Welfare, S.~Sudhahar, and N.~Cristianini, ``Scalable
  preference learning from data streams,'' in \emph{Proceedings of the 24th
  International Conference on World Wide Web}.\hskip 1em plus 0.5em minus
  0.4em\relax ACM, 2015, pp. 885--890.

\bibitem{Agh18}
A.~Aghazadeh, R.~Spring, D.~LeJeune, G.~Dasarathy, A.~Shrivastava, and R.~G.
  Baraniuk, ``Mission: Ultra large-scale feature selection using
  count-sketches,'' in \emph{International Conference on Machine Learning},
  2018.

\bibitem{GDC12}
A.~Goyal, H.~Daum\'{e}~III, and G.~Cormode, ``Sketch algorithms for estimating
  point queries in nlp,'' in \emph{Proceedings of the 2012 Joint Conference on
  Empirical Methods in Natural Language Processing and Computational Natural
  Language Learning}.\hskip 1em plus 0.5em minus 0.4em\relax Association for
  Computational Lingustics, 2012, pp. 1093--1103.

\bibitem{akter2017computing}
M.~Akter and T.~Hashem, ``Computing aggregates over numeric data with
  personalized local differential privacy,'' in \emph{Australasian Conference
  on Information Security and Privacy}.\hskip 1em plus 0.5em minus 0.4em\relax
  Springer, 2017, pp. 249--260.

\bibitem{dwork2006differential}
C.~Dwork, ``Differential privacy [c] ilproc of the 33rd international
  colloquium on automata, languages and programming,'' 2006.

\bibitem{dwork2014algorithmic}
C.~Dwork, A.~Roth \emph{et~al.}, ``The algorithmic foundations of differential
  privacy.'' \emph{Foundations and Trends in Theoretical Computer Science},
  vol.~9, no. 3-4, pp. 211--407, 2014.

\bibitem{duchi2013local}
J.~C. Duchi, M.~I. Jordan, and M.~J. Wainwright, ``Local privacy and
  statistical minimax rates,'' in \emph{2013 IEEE 54th Annual Symposium on
  Foundations of Computer Science}.\hskip 1em plus 0.5em minus 0.4em\relax
  IEEE, 2013, pp. 429--438.

\bibitem{li2012sampling}
N.~Li, W.~Qardaji, and D.~Su, ``On sampling, anonymization, and differential
  privacy or, k-anonymization meets differential privacy,'' in
  \emph{Proceedings of the 7th ACM Symposium on Information, Computer and
  Communications Security}, 2012, pp. 32--33.

\bibitem{sun2020federated}
L.~Sun and L.~Lyu, ``Federated model distillation with noise-free differential
  privacy,'' \emph{arXiv preprint arXiv:2009.05537}, 2020.

\bibitem{OS10}
O.~Catrina and A.~Saxena, ``Secure computation with fixed-point numbers,'' in
  \emph{Financial Cryptography and Data Security}, R.~Sion, Ed.\hskip 1em plus
  0.5em minus 0.4em\relax Berlin, Heidelberg: Springer Berlin Heidelberg, 2010,
  pp. 35--50.

\bibitem{Lindell20}
Y.~Lindell, ``Secure multiparty computation (mpc),'' Cryptology ePrint Archive,
  Report 2020/300, 2020, \url{https://eprint.iacr.org/2020/300}.

\bibitem{9264723}
M.~{Yang}, I.~{Tjuawinata}, K.~Y. {Lam}, J.~{Zhao}, and L.~{Sun}, ``Secure hot
  path crowdsourcing with local differential privacy under fog computing
  architecture,'' \emph{IEEE Transactions on Services Computing}, pp. 1--1,
  2020.

\bibitem{ye2019multiple}
Y.~Ye, M.~Zhang, D.~Feng, H.~Li, and J.~Chi, ``Multiple privacy regimes
  mechanism for local differential privacy,'' in \emph{International Conference
  on Database Systems for Advanced Applications}.\hskip 1em plus 0.5em minus
  0.4em\relax Springer, 2019, pp. 247--263.

\bibitem{a2009height}
B.~A’hearn, F.~Peracchi, and G.~Vecchi, ``Height and the normal distribution:
  evidence from italian military data,'' \emph{Demography}, vol.~46, no.~1, pp.
  1--25, 2009.

\bibitem{clauset2009power}
A.~Clauset, C.~R. Shalizi, and M.~E. Newman, ``Power-law distributions in
  empirical data,'' \emph{SIAM review}, vol.~51, no.~4, pp. 661--703, 2009.

\bibitem{gong2012evolution}
N.~Z. Gong, W.~Xu, L.~Huang, P.~Mittal, E.~Stefanov, V.~Sekar, and D.~Song,
  ``Evolution of social-attribute networks: measurements, modeling, and
  implications using google+,'' in \emph{Proceedings of the 2012 Internet
  Measurement Conference}, 2012, pp. 131--144.

\bibitem{kairouz2016discrete}
P.~Kairouz, K.~Bonawitz, and D.~Ramage, ``Discrete distribution estimation
  under local privacy,'' in \emph{Proceedings of the 33rd International
  Conference on International Conference on Machine Learning - Volume 48}, ser.
  ICML'16.\hskip 1em plus 0.5em minus 0.4em\relax JMLR.org, 2016, p.
  2436–2444.

\bibitem{wang2017locally}
T.~Wang, J.~Blocki, N.~Li, and S.~Jha, ``Locally differentially private
  protocols for frequency estimation,'' in \emph{26th $\{$USENIX$\}$ Security
  Symposium ($\{$USENIX$\}$ Security 17)}, 2017, pp. 729--745.

\bibitem{duchi2018minimax}
J.~C. Duchi, M.~I. Jordan, and M.~J. Wainwright, ``Minimax optimal procedures
  for locally private estimation,'' \emph{Journal of the American Statistical
  Association}, vol. 113, no. 521, pp. 182--201, 2018.

\bibitem{wang2019collecting}
N.~Wang, X.~Xiao, Y.~Yang, J.~Zhao, S.~C. Hui, H.~Shin, J.~Shin, and G.~Yu,
  ``Collecting and analyzing multidimensional data with local differential
  privacy,'' in \emph{2019 IEEE 35th International Conference on Data
  Engineering (ICDE)}.\hskip 1em plus 0.5em minus 0.4em\relax IEEE, 2019, pp.
  638--649.

\bibitem{yang2020local}
M.~Yang, L.~Lyu, J.~Zhao, T.~Zhu, and K.-Y. Lam, ``Local differential privacy
  and its applications: A comprehensive survey,'' \emph{arXiv preprint
  arXiv:2008.03686}, 2020.

\bibitem{bassily2015local}
R.~Bassily and A.~Smith, ``Local, private, efficient protocols for succinct
  histograms,'' in \emph{Proceedings of the forty-seventh annual ACM symposium
  on Theory of computing}, 2015, pp. 127--135.

\bibitem{wang2016mutual}
S.~Wang, L.~Huang, P.~Wang, Y.~Nie, H.~Xu, W.~Yang, X.-Y. Li, and C.~Qiao,
  ``Mutual information optimally local private discrete distribution
  estimation,'' \emph{arXiv preprint arXiv:1607.08025}, 2016.

\bibitem{pettai2015combining}
M.~Pettai and P.~Laud, ``Combining differential privacy and secure multiparty
  computation,'' in \emph{Proceedings of the 31st Annual Computer Security
  Applications Conference}, 2015, pp. 421--430.

\bibitem{bindschaedler2017achieving}
V.~Bindschaedler, S.~Rane, A.~E. Brito, V.~Rao, and E.~Uzun, ``Achieving
  differential privacy in secure multiparty data aggregation protocols on star
  networks,'' in \emph{Proceedings of the Seventh ACM on Conference on Data and
  Application Security and Privacy}, 2017, pp. 115--125.

\bibitem{guo2018practical}
Y.~Guo and Y.~Gong, ``Practical collaborative learning for crowdsensing in the
  internet of things with differential privacy,'' in \emph{2018 IEEE Conference
  on Communications and Network Security (CNS)}.\hskip 1em plus 0.5em minus
  0.4em\relax IEEE, 2018, pp. 1--9.

\bibitem{sun2020ldp}
L.~Sun, J.~Qian, X.~Chen, and P.~S. Yu, ``Ldp-fl: Practical private aggregation
  in federated learning with local differential privacy,'' \emph{arXiv preprint
  arXiv:2007.15789}, 2020.

\bibitem{hu2021make}
C.~Hu, J.~Li, Z.~Liu, X.~Guo, Y.~Wei, X.~Guang, G.~Loukides, and C.~Dong, ``How
  to make private distributed cardinality estimation practical, and get
  differential privacy for free,'' in \emph{30th USENIX Security
  Symposium}.\hskip 1em plus 0.5em minus 0.4em\relax Newcastle University,
  2021.

\bibitem{truex2020ldp}
S.~Truex, L.~Liu, K.-H. Chow, M.~E. Gursoy, and W.~Wei, ``Ldp-fed: federated
  learning with local differential privacy,'' in \emph{Proceedings of the Third
  ACM International Workshop on Edge Systems, Analytics and Networking}, 2020,
  pp. 61--66.

\bibitem{balle2020privacy}
B.~Balle, G.~Barthe, and M.~Gaboardi, ``Privacy profiles and amplification by
  subsampling,'' \emph{Journal of Privacy and Confidentiality}, vol.~10, no.~1,
  2020.

\bibitem{joy2017differential}
J.~Joy and M.~Gerla, ``Differential privacy by sampling,'' \emph{arXiv preprint
  arXiv:1708.01884}, 2017.

\bibitem{husain2020local}
H.~Husain, B.~Balle, Z.~Cranko, and R.~Nock, ``Local differential privacy for
  sampling,'' in \emph{International Conference on Artificial Intelligence and
  Statistics}.\hskip 1em plus 0.5em minus 0.4em\relax PMLR, 2020, pp.
  3404--3413.

\bibitem{yang2017personalized}
M.~Yang, T.~Zhu, Y.~Xiang, and W.~Zhou, ``Personalized privacy preserving
  collaborative filtering,'' in \emph{International Conference on Green,
  Pervasive, and Cloud Computing}.\hskip 1em plus 0.5em minus 0.4em\relax
  Springer, 2017, pp. 371--385.

\bibitem{wang2018personalized}
Z.~Wang, J.~Hu, R.~Lv, J.~Wei, Q.~Wang, D.~Yang, and H.~Qi, ``Personalized
  privacy-preserving task allocation for mobile crowdsensing,'' \emph{IEEE
  Transactions on Mobile Computing}, vol.~18, no.~6, pp. 1330--1341, 2018.

\bibitem{zhang2019probabilistic}
S.~Zhang, L.~Liu, Z.~Chen, and H.~Zhong, ``Probabilistic matrix factorization
  with personalized differential privacy,'' \emph{Knowledge-Based Systems},
  vol. 183, p. 104864, 2019.

\bibitem{niu2020utility}
B.~Niu, Y.~Chen, B.~Wang, J.~Cao, and F.~Li, ``Utility-aware exponential
  mechanism for personalized differential privacy,'' in \emph{2020 IEEE
  Wireless Communications and Networking Conference (WCNC)}.\hskip 1em plus
  0.5em minus 0.4em\relax IEEE, 2020, pp. 1--6.

\bibitem{cai2021profit}
H.~Cai, Y.~Zhu, J.~Li, and J.~Yu, ``A profit-maximizing mechanism for
  query-based data trading with personalized differential privacy,'' \emph{The
  Computer Journal}, vol.~64, no.~2, pp. 264--280, 2021.

\end{thebibliography}

\end{document}